\begin{document}

\title{Generalized Circular One-Way Jumping Finite Automata}
\author{Ujjwal Kumar Mishra \and
Kalpana Mahalingam
\and
Rama Raghavan}
\authorrunning{ukmishra et al.}
\institute{Department of Mathematics, Indian Institute of Technology Madras,\\ Chennai-600036, India.\\
\email{ma16d030@smail.iitm.ac.in}\\
\email{\{kmahalingam,ramar\}@iitm.ac.in}}
\maketitle            
\begin{abstract}
A discontinuous model of computation called one-way jumping finite automata was defined by H. Chigahara et. al. This model was a restricted version of the model jumping finite automata. One-way jumping finite automata change their states after deleting a letter of an input and jump only in one direction. Allowing a state to delete a subword instead of a letter, we define a new model generalized circular one-way jumping finite automata. These automata work on an input in a circular manner. Similar to one-way jumping finite automata, generalized circular one-way jumping finite automata also jump only in one direction. We show that this newly defined model is powerful than one-way jumping finite automata. We define new variants(right and left) of the model generalized circular one-way jumping finite automata and compare them. We also compare the newly defined model with Chomsky hierarchy. Finally, we explore closure properties of the model.

\keywords{Jumping Finite Automata  \and One-Way Jumping Finite Automata \and Generalized Linear One-Way Jumping Finite Automata \and Generalized Circular One-Way Jumping Finite Automata.}

\end{abstract}

\section{Introduction}
General jumping finite automata $\it{(GJFA)}$, very first discontinuous model of computation, were introduced in \cite{JFA2012} by Meduna et. al. These automata read and delete an input in a discontinuous manner. The automata can jump in either direction to delete the input. The model was extensively studied in \cite{basicpropertiesJFA,trodip,oscadojfa,cacrojfa}. Following \cite{JFA2012}, several jumping transition models have been introduced and studied in the last few years \cite{owjfa,glowjfa,opvojfa,ndrowjfa,eoawjm,odjfaatcp,WKJFA,J53WKA,twja}. 

The model which is of our interest is one-way jumping finite automata ($\it{OWJFA}$), was defined in \cite{owjfa}. The model $\it{OWJFA}$ was defined by giving a restriction on jumping behaviour of the model jumping finite automata ($\it{JFA}$), a restricted variant of $\it{GJFA}$. There are two variants of $\it{OWJFA}$: right one-way jumping finite automata ($\it{ROWJFA}$) and left one-way jumping finite automata ($\it{LOWJFA}$). $\it{ROWJFA}$ starts processing an input string from the leftmost symbol of the input whereas $\it{LOWJFA}$ starts processing an input string from the rightmost symbol. The models can
only jump over symbols which they cannot process in its current state. Some properties of $\it{ROWJFA}$ are given in \cite{porowjfa}.  The decision problems of the model $\it{ROWJFA}$ are discussed in \cite{dorowjfa}. Nondeterministic variant of the model $\it{ROWJFA}$ is defined and studied in \cite{ndrowjfa}.

$\it{OWJFA}$ processes one symbol at a time, i.e., it go from one state to another by deleting a symbol. One extension of this model that changes its states by deleting a subword is generalized linear one-way jumping finite automata ($\it{GLOWJFA}$), was defined in \cite{glowjfa}. These automata work on a linear tape and can process a subword in a state. In this paper, we introduce a new model called generalized circular one-way jumping finite automata $(\it{GCOWJFA})$. This model is also a generalization, in deleting an input, of the model $\it{OWJFA}$. Unlike $\it{OWJFA}$, it can process a subword, i.e., $\it{GCOWJFA}$ changes its states by deleting a subword instead of a symbol. The difference between the model $\it{GLOWJFA}$ and $\it{GCOWJFA}$ is that $\it{GLOWJFA}$ work on a linear tape whereas $\it{GCOWJFA}$ work on a circular tape. Because of this difference both the models have distinct computational power.  We show that the generalization $\it{GCOWJFA}$ increases the power of $\it{OWJFA}.$ Similar to $\it{OWJFA}$, we define two types of $\it{GCOWJFA}$: generalized right circular one-way jumping finite automata ($\it{GRCOWJFA}$) and generalized left circular one-way jumping finite automata ($\it{GLCOWJFA}$). The models $\it{GRCOWJFA}$ and $\it{GLCOWJFA}$ delete an input in clockwise and anti-clockwise direction, respectively. Both the models $\it{GRCOWJFA}$ and $\it{GLCOWJFA}$ can jump over a part of the input if they cannot delete it.

In this paper, we compare the model $\it{GCOWJFA}$ with the models $\it{OWJFA}$ and $\it{GLOWJFA}$. We also compare $\it{GRCOWJFA}$ and $\it{GLCOWJFA}$. The language classes of $\it{GRCOWJFA}$ and $\it{GLCOWJFA}$ are compared with the language classes of Chomsky hierarchy. Closure properties of the language classes of $\it{GRCOWJFA}$ and $\it{GLCOWJFA}$ are explored.

The paper is organised as follows: In Section \ref{prelim}, we give some basic notion and notation. We also recall the definitions of $\it{ROWJFA}$, $\it{LOWJFA}$, $\it{GRLOWJFA}$, $\it{GLLOWJFA}$ and give examples of these models, in this section. The models $\it{GRCOWJFA}$ and $\it{GLCOWJFA}$ have been introduced in Section \ref{gcowjfa}. The models have been illustrated by examples. In Section \ref{gcgl}, the models $\it{GLOWJFA}$ and $\it{GCOWJFA}$ have been compared. We compare the models $\it{GRCOWJFA}$ and $\it{GLCOWJFA}$ in Section \ref{grcglc}. In Section \ref{gcch}, the language classes of the newly introduced model $\it{GCOWJFA}$ are compared with the language classes of Chomsky hierarchy. Finally, we discuss closure properties of the newly introduced models, in Section \ref{closure}. We end the paper with few concluding remarks.

\section{Preliminaries}\label{prelim} 
In this section, we recall some basic notations and definitions. An alphabet set is a finite non-empty set $\Sigma$. The elements of $\Sigma$ are called letters or symbols. A word or string $w=a_1a_2 \cdots a_n$ is a finite sequence of symbols, where $a_i \in \Sigma$ for $1 \leq i \leq n$. The reverse of $w$ is obtained by writing the symbols of $w$ in reverse order and denoted by $w^R$, hence $w^R=a_n \cdots a_2a_1$. By $\Sigma^*$, we denote the set of all words over the alphabet $\Sigma$ and by $\lambda$, the empty word. A language $L$ is a subset of $\Sigma^*$ and $L^\mathrm{C} = \Sigma^* \setminus L$ denotes the complement of $L$. The symbol $\emptyset$ represents the empty language or empty set. For an arbitrary word $w \in \Sigma^*$, we denote its length or the number of letters in it by $|w|$. For a letter $a \in \Sigma$, $| w |_a$ denotes the number of occurrences of $a$ in $w$. Note that, $\Sigma^+=\Sigma^* \setminus \{ \lambda \}$ and $|\lambda|=0$. For a finite set $A$, $|A|$ denotes the number of elements in $A$. A word $y \in \Sigma^*$ is a subword or substring of a word $w \in \Sigma^*$ if there exist words $x, z \in \Sigma^*$ such that $w=xyz$. If $w=uv$,
then the words $u \in \Sigma^*$ and $v \in \Sigma^*$ are said to be a prefix and a suffix of $w$, respectively. Two sets $A$ and $B$ are comparable if $A \subseteq B$ or $B \subseteq A$. For two sets $A$ and $B$, if $A$ is a proper subset of $B$, then we use the notation $A \subset B$. For definitions of other basic language operations (like union, intersection etc.), the reader is referred to \cite{book}.

We now recall the definitions of right and left one-way jumping finite automata \cite{owjfa}.
A right  one-way jumping finite automaton $(\it{ROWJFA})$ is a quintuple $\mathcal{A}=(\Sigma, Q, q_0, F, R)$, where $\Sigma$ is an alphabet set, $Q$ is a finite set of states, $q_0$ is the starting state, $F \subseteq Q$ is a set of final states and $R \subseteq Q \times \Sigma \times Q$ is a set of rules, where for a state $p \in Q$ and a symbol $a \in \Sigma$, there is at most one $q \in Q$ such that $(p,a,q) \in R$. By a rule $(p,a,q) \in R$, we mean that the automaton goes to the state $q$ from the state $p$ after deleting the symbol `$a$'.  For $p \in Q$, we set $$\Sigma_p=\Sigma_{R,p}=\{b \in \Sigma 
: (p,b,q) \in R ~ for~ some~ q \in Q\}.$$ A configuration of the right one-way jumping automaton $\mathcal{A}$ is a string of $Q\Sigma^*$. The right one-way jumping relation, denoted as $\circlearrowright_{\mathcal{A}}$, over $Q\Sigma^*$ is defined as follows. Let $(p,a,q) \in R, x \in (\Sigma \setminus \Sigma_p)^*$ and $y \in \Sigma^*$. Then, the $\it{ROWJFA}$ $\mathcal{A}$ makes a jump from the configuration $pxay$ to the configuration $qyx$, written as $pxay \circlearrowright _{\mathcal{A}} qyx$ or just $pxay \circlearrowright  qyx$ if it is clear which $\it{ROWJFA}$ is being referred. Let $\circlearrowright^+$ and $\circlearrowright^*$ denote the transitive and reflexive-transitive closure of $\circlearrowright$, respectively. The language accepted by the automaton $\mathcal{A}$ is $$L_R(\mathcal{A})=\{w \in \Sigma^* : q_0w \circlearrowright^* q_f ~ for~some ~ q_f \in F\}.$$

A left one-way jumping finite automaton $(\it{LOWJFA})$ is similar to that of a $\it{ROWJFA}$ except that a configuration of $\it{LOWJFA}$ is a string of $\Sigma^*Q$. By a rule $(q,a,p) \in R$, we mean that the automaton goes to the state $q$ from the state $p$ after deleting the symbol `$a$'. For $(q,a,p) \in R, x \in (\Sigma \setminus \Sigma_p)^*$ and $y \in \Sigma^*$ the  $\it{LOWJFA}$ makes a jump from the configuration $yaxp$ to the configuration $xyq$, written as $xyq \prescript{}{\mathcal{A}}{\circlearrowleft}$ $yaxp$ or just $xyq \circlearrowleft yaxp $ if it is clear which $\it{LOWJFA}$ is being referred.  The language accepted by the automaton $\mathcal{A}$ is $$L_L(\mathcal{A})=\{w \in \Sigma^* : q_f \prescript{*}{}{\circlearrowleft}~ wq_0 ~ for~some ~ q_f \in F\}.$$

\begin{example}
Consider the $\it{ROWJFA}$
$$
\begin{tikzpicture}
\node[state, initial] (q0) {$q_0$};
\node[state, right of=q0, xshift=1.5cm] (q1) {$q_1$};
\node[state, accepting, right of=q1, xshift=1.5cm] (q2) {$q_2$};
\path[->]
(q0) edge[above] node{a} (q1)
(q1) edge[loop above] node{b} (q1)
(q1) edge[above] node{c} (q2);
\end{tikzpicture}
$$

\noindent $\mathcal{A}=(\{a,b,c\},\{q_0,q_1,q_2\},q_0,\{q_2\},R)$, the set of rules $R$ is $R=\{(q_0,a, q_1),$ $(q_1,b, q_1),$ $(q_1,c,q_2)\}$. Here, $\Sigma_{q_0}=\{a\}$, $\Sigma_{q_1}=\{b,c\}$, $\Sigma_{q_2}=\emptyset$.
Since $\Sigma_{q_0}=\{a\}$,  the automaton will jump over any word from $(\Sigma\setminus\Sigma_{q_0})^*=\{b,c\}^*$ at $q_0$. Also, since $\Sigma_{q_1}=\{b,c\}$,  the automaton will jump over any word from $(\Sigma\setminus\Sigma_{q_1})^*=\{a\}^*$ at $q_1$. It is clear from the rules that if a word is in the language, then it will have exactly one $`a$' and one $`c$'. Moreover, the automaton has a loop labeled $`b$' at $q_1$. Therefore, the language will have the words of the form $b^lab^mcb^n$ or $b^lcb^mab^n$. We can divide these words in 8 groups $\{l=m=n=0\}$, $\{l=0, m=0, n \geq 1\}$, $\{l=0, m \geq 1, n=0\}$, $\{l=0, m \geq 1, n \geq 1 \}$, $\{l \geq 1, m=n=0\}$, $\{l \geq 1, m=0, n \geq 1\}$, $\{l \geq 1, m \geq 1, n=0\}$, $\{l \geq 1, m \geq 1, n \geq 1 \}$. Using case-by-case analysis, we conclude that the words $ab^mc$, where $m \geq 0$ and $b^lcab^n$, where $l,n \geq 0$ are in the language. For example, consider a string $b^lcab^n$, where $l,n \geq 0$. The sequence of transitions is 
$$q_0b^lcab^n \circlearrowright q_1b^nb^lc \circlearrowright^* q_1c  \circlearrowright q_2.$$

\noindent Hence, the language accepted by the automaton is 

$$L_R(\mathcal{A})=\{ab^mc~:~ m \geq 0\} \cup \{b^lcab^n~:~ l,n \geq 0\}.$$
\end{example}

\begin{note}
The language of the automaton, when it is considered as jumping finite automaton  (\cite{JFA2012}) is $\{b^lab^mcb^n~:~ l,m,n \geq 0\}$ $\cup$ $\{b^lcb^mab^n~:~ l,m,n \geq 0\}$.
\end{note}

A right one-way jumping finite automaton \cite{owjfa} can be  interpreted in two ways: one in which we think it working on linear tape and the other in which we think it working on a tape in a circular manner. When it is thought working on linear tape, in that case, the read head moves in one direction only and starts from the leftmost symbol of an input word. It moves from left to right (and possibly jumps over parts of the input) and upon reaching the end of the input word the automaton jumps to the leftmost symbol of the remaining concatenated word and starts deleting the word freshly from the last visited state. If a transition is defined for the current state and the next letter to be read, then the automaton deletes the letter. If not, but in the remaining input there are letters for which a transition is defined from the current state, the read head jumps to the nearest such letter to the right for its reading. The computation continues until all the letters are deleted or the automaton is stuck in a state in which it cannot delete any letter of the remaining word. 

In the light of this interpretation, an extension of $\it{ROWJFA}$ was given based on reading length of a word and generalized right linear one-way jumping finite automaton was defined in \cite{glowjfa}. A generalized right linear one-way jumping finite automaton($\it{GRLOWJFA}$) works on a linear tape and deletes an input word from left to right. The automaton starts deleting the input word with the reading head at the leftmost position of the input word. The automaton can jump over a part of the input word which it cannot process in its current state. It reads the nearest available subword of the word present on the input tape. If there is no transition available for current state, the automaton returns(head of the automaton returns) to the leftmost position of the current input and continues its computation. The computation continues until all the letters are deleted or the automaton is stuck in a state in which it cannot delete any subword of the remaining word. The generalized left linear one-way jumping finite automaton($\it{GLLOWJFA}$) work similar to $\it{GRLOWJFA}$. The read head of a $\it{GLLOWJFA}$ moves from right to left. We recall the formal definitions of generalized right linear one-way jumping finite automaton and generalized left linear one-way jumping finite automaton from \cite{glowjfa}, given below.

\begin{definition}\label{grld}
A generalized right linear one-way jumping finite automaton$~~$ ($\it{GRLOWJFA}$) is a tuple $\mathcal{A}=(\Sigma,Q,q_0,F, R)$, where $\Sigma,Q,q_0,F$ are same as $\it{ROWJFA}$ and $R \subset Q \times \Sigma^+ \times Q$ is a finite set of rules, where for a state $p \in Q$ and a word $w \in \Sigma^+$, there is at most one $q \in Q$ such that $(p,w,q) \in R$.  By a rule $(p,w,q) \in R$, we mean that the automaton goes to the state $q$ from the state $p$ after deleting the word `$w$'. The automaton is deterministic in the sense that for a state $p$ and for a word $w \in \Sigma^+$ we have at most one $q \in Q$ such that $(p,w,q) \in R$. A configuration of the automaton $\mathcal{A}$ is a string of $\Sigma^* Q\Sigma^*$. The generalized right linear one-way jumping relation, denoted as $\curvearrowright_{\mathcal{A}}$, or just $\curvearrowright$ if it is clear which $\it{GRLOWJFA}$ is being referred, over $\Sigma^*Q\Sigma^*$ is defined as follows: For a state $p \in Q$, set $\Sigma_p=\Sigma_{R,p}=\{w \in \Sigma^+ 
: (p,w,q) \in R$ for some $q \in Q\}$.

\begin{itemize}
    \item[1.] Let $t,u,v \in \Sigma^*$ and $(p,x,q) \in R$. Then the $\it{GRLOWJFA}$ $\mathcal{A}$ makes a jump from the configuration $tpuxv$ to the configuration $tuqv$, written as $$tpuxv \curvearrowright tuqv$$ 
if $u$ does not contain any word from $\Sigma_p$ as a subword, i.e., $u \neq u'wu''$,
where $u',u'' \in \Sigma^*,~w \in \Sigma_p$ and $u_2x_1 \neq x$, where $u_2, x_1 \in \Sigma^+$ and $u=u_1u_2,~x=x_1x_2$ for some $u_1,x_2 \in \Sigma^*$. This is what we mean by nearest available subword.
\item[2.] Let $x \in \Sigma^+,y \in \Sigma^*$ and $y$ does not contain any word from $\Sigma_p$ as a subword, i.e., $y \neq y_1wy_2$, where $y_1,y_2 \in \Sigma^*$ and $w \in \Sigma_p$, then the $\it{GRLOWJFA}$ $\mathcal{A}$ makes a jump from the configuration $xpy$ to the configuration $pxy$, written as $$xpy \curvearrowright pxy.$$ 
\end{itemize} 

The language accepted by the $\it{GRLOWJFA}$ $\mathcal{A}$ is $$L_{GRL}(\mathcal{A})=\{w \in \Sigma^* : q_0w \curvearrowright^* q_f ~ for~some ~ q_f \in F\}.$$
\end{definition}
Similar to that of $\it{GRLOWJFA}$, the generalized left linear one-way jumping finite automaton was defined.

\begin{definition}\label{glld}
A generalized left linear one-way jumping finite automaton denoted by $\it{GLLOWJFA}$ is a tuple $\mathcal{A}=(\Sigma,Q,q_0,F, R)$, where $\Sigma,Q,q_0,F$ are same as $\it{ROWJFA}$ and $R \subset Q \times \Sigma^+ \times Q$ is a finite set of rules, where for a state $p \in Q$ and a word $w \in \Sigma^+$, there is at most one $q \in Q$ such that $(q,w,p) \in R$. By a rule $(q,w,p) \in R$, we mean that the automaton goes to the state $q$ from the state $p$ after deleting the word `$w$'. The automaton is deterministic in the sense that for a state $p$ and for a word $w \in \Sigma^+$ we have at most one $q \in Q$ such that $(q,w,p) \in R$. A configuration of the automaton $\mathcal{A}$ is a string of $\Sigma^*Q\Sigma^*$. The generalized left linear one-way jumping relation, denoted as $\prescript{}{{\mathcal{A}}}{\curvearrowleft}$, or just $\curvearrowleft$ if it is clear which $\it{GLLOWJFA}$ is being referred, over $\Sigma^*Q\Sigma^*$ is defined as follows: For a state $p \in Q$, set $\Sigma_p=\Sigma_{R,p}=\{w \in \Sigma^+ 
: (q,w,p) \in R$ for some $q \in Q\}$.
\begin{itemize}
    \item[1.] Let $t,u,v \in \Sigma^* $ and $(q,x,p) \in R$. Then, the $\it{GLLOWJFA}$ $\mathcal{A}$ makes a jump from the configuration $vxupt$ to the configuration $vqut$, written as $$vqut \curvearrowleft vxupt$$ 
if $u$ does not contain any word from $\Sigma_p$ as a subword, i.e., $u \neq u'wu''$,
where $u',u'' \in \Sigma^*,~w \in \Sigma_p$ and $x_2u_1 \neq x$, where $u_1, x_2 \in \Sigma^+$ and $u=u_1u_2,~x=x_1x_2$ for some $u_2,x_1 \in \Sigma^*$. 

\item[2.] Let $x \in \Sigma^+,y \in \Sigma^*$ and $y$ does not contain any word from $\Sigma_p$ as a subword, i.e., $y \neq y_1wy_2$, where $y_1,y_2 \in \Sigma^*$ and $w \in \Sigma_p$, then the $\it{GLLOWJFA}$ $\mathcal{A}$ makes a jump from the configuration $ypx$ to the configuration $yxp$, written as $$yxp \curvearrowleft ypx.$$
\end{itemize}
The language accepted by the $\it{GLLOWJFA}$ $\mathcal{A}$ is $$L_{GLL}(\mathcal{A})=\{w \in \Sigma^* : q_f~ \prescript{*}{}{\curvearrowleft}~ wq_0 ~ for~some ~ q_f \in F\}.$$

\end{definition}

We illustrate Definitions \ref{grld} and \ref{glld} with the following example.

\begin{example}\label{exgl}
Consider the automaton $\mathcal{A}=(\{a,b,c\},\{q_0,q_1\},q_0,\{q_1\},R)$, where $R$ is depicted in the figure below.

$$
\begin{tikzpicture}
\node[state, initial] (q0) {$q_0$};
\node[state, accepting, right of=q0, xshift=1.5cm] (q1) {$q_1$};
\path[->] (q0) edge[loop above] node{c} (q0)
(q0) edge  [above] node{ab} (q1);
\end{tikzpicture}
$$
We first consider the automaton as a $\it{GRLOWJFA}$. Then, the set of rules is $R=\{(q_0,c,q_0), (q_0,ab,q_1) \}$. Hence, $\Sigma_{q_0}=\{c,ab\}$ and $\Sigma_{q_1}= \emptyset$.
 Since $\Sigma_{q_0}=\{c,ab\}$, the automaton at $q_0$ can neither jump over a `$c$' nor over `$ab$'. The automaton can delete arbitrary number of $c$'s at state $q_0$ and can jump $`a$' or $`b$' at $q_0$.  
Consider a word $c^lac^mbc^n$, where $l,n \geq 0, m \geq 1$. Then, $q_0c^lac^mbc^n \curvearrowright^* q_0ac^mbc^n \curvearrowright aq_0c^{m-1}bc^n \curvearrowright^*
 aq_0bc^n \curvearrowright abq_0c^{n-1} \curvearrowright^* abq_0 \curvearrowright q_0ab \curvearrowright q_1$
and hence,
$$L_{GRL}(\mathcal{A})=\{c^lab~|~ n \geq 0 \} \cup \{c^lac^mbc^n~|~ l,n \geq 0, m \geq 1\}.$$
Now, consider the automaton as $\it{GLLOWJA}$. In this case, the set of rules is $R=\{(q_0,c,q_0),(q_1,ab,q_0)\}$. It can be shown, as explained above, the language accepted by the automaton $\it{GLLOWJFA}$ is  
 $$L_{GLL}(\mathcal{A})=\{abc^n~|~ n \geq 0 \} \cup \{c^lac^mbc^n~|~ l,n \geq 0, m \geq 1\}.$$

\end{example}

\section{Generalized Circular One-Way Jumping Finite Automata}\label{gcowjfa}

The right one-way jumping automata \cite{owjfa} can also be thought/interpreted as working on a tape in a circular manner. In this case, a right one-way jumping finite automaton works on a circular tape and the read head moves in the clockwise direction. The read head starts from the first symbol of an input, i.e., read head is at $w_1$ of the input $w=w_1w_2 \cdots w_n$. The automaton can jump over parts of the input. If a transition is defined for a current state and the next letter to be read, then the automaton deletes the symbol. If not and in the remaining input there are letters for which a transition is defined from the current state, the read head jumps to the nearest such letter to delete it. The computation continues until all the letters are deleted or the automaton is stuck in a state in which it cannot delete any letter of the remaining word.

Keeping this interpretation in mind, we extend the definition of $\it{ROWJFA}$ based on reading length of a word and define generalized right circular one-way jumping finite automata. A generalized right circular one-way jumping finite automaton works on a circular tape and deletes an input word in the clockwise direction. The automaton starts deleting the input word with the reading head at the first symbol, the symbol at which reading head is present, of the input, ie., $w_1$ of $w=w_1w_2 \cdots w_n$. The automaton can jump over a part of the input word that it cannot delete. After deleting a subword of the word present on the input tape, the word is circularly permuted and the reading head goes to the first symbol just after the deleted subword. Suppose $uxv$ is present on the tape and the automaton deletes $x$, then the word is circularly permuted to get $vu$ and the reading head goes to the first symbol of $vu$.   The computation continues until all the letters are deleted or the automaton is stuck in a state in which it cannot delete any subword of the remaining word present on the tape. The $\it{GLCOWJFA}$ works similar to that of $\it{GRCOWJFA}$. The tape head of a $\it{GLCOWJFA}$ moves in the anti-clockwise direction. The formal definitions of generalized right circular one-way jumping finite automaton and generalized left circular one-way jumping finite automaton are given below.

\begin{definition}\label{grcd}
A generalized right circular one-way jumping finite automaton$~~$ ($\it{GRCOWJFA}$) is a tuple $\mathcal{A}=(\Sigma,Q,q_0,F, R)$, where $\Sigma,Q,q_0,F$ are same as $\it{ROWJFA}$ and $R \subset Q \times \Sigma^+ \times Q$ is a finite set of rules, where for a state $p \in Q$ and a word $w \in \Sigma^+$, there is at most one $q \in Q$ such that $(p,w,q) \in R$.  By a rule $(p,w,q) \in R$, we mean that the automaton goes to the state $q$ from the state $p$ after deleting the word `$w$'. The automaton is deterministic in the sense that for a state $p$ and for a word $w \in \Sigma^+$ we have at most one $q \in Q$ such that $(p,w,q) \in R$.
A configuration of the automaton $\mathcal{A}$ is a string of $Q\Sigma^*$. The generalized right circular one-way jumping relation, denoted as $\circlearrowright_{\mathcal{A}}$, or just $\circlearrowright$ if it is clear which $\it{GRCOWJFA}$ is being referred, over $Q\Sigma^*$ is defined as follows: For a state $p \in Q$, set $\Sigma_p=\Sigma_{R,p}=\{w \in \Sigma^+ 
: (p,w,q) \in R$ for some $q \in Q\}$. Let $u,v \in \Sigma^*$ and $(p,x,q) \in R$. Then, the $\it{GRCOWJFA}$ $\mathcal{A}$ makes a jump from the configuration $puxv$ to the configuration $qvu$, written as $$puxv \circlearrowright qvu$$ 
if $u$ does not contain any word from $\Sigma_p$ as a subword, i.e., $u \neq u'wu''$,
where $u',u'' \in \Sigma^*,~w \in \Sigma_p$ and $u_2x_1 \neq x$, where $u_2, x_1 \in \Sigma^+$ and $u=u_1u_2,~x=x_1x_2$ for some $u_1,x_2 \in \Sigma^*$. The language accepted by the $\it{GRCOWJFA}$ $\mathcal{A}$ is $$L_{GRC}(\mathcal{A})=\{w \in \Sigma^* : q_0w \circlearrowright^* q_f ~ for~some ~ q_f \in F\}.$$
\end{definition}
Similar to that of $\it{GRCOWJFA}$, we define the notion of a generalized left circular one-way jumping finite automaton as below.

\begin{definition}\label{glcd}
A generalized left circular one-way jumping finite automaton denoted by $\it{GLCOWJFA}$ is a tuple $\mathcal{A}=(\Sigma,Q,q_0,F, R)$, where $\Sigma,Q,q_0,F$ are same as $\it{ROWJFA}$ and $R \subset Q \times \Sigma^+ \times Q$ is a finite set of rules, where for a state $p \in Q$ and a word $w \in \Sigma^+$, there is at most one $q \in Q$ such that $(q,w,p) \in R$. By a rule $(q,w,p) \in R$, we mean that the automaton goes to the state $q$ from the state $p$ after deleting the word `$w$'. The automaton is deterministic in the sense that for a state $p$ and for a word $w \in \Sigma^+$ we have at most one $q \in Q$ such that $(q,w,p) \in R$. A configuration of the automaton $\mathcal{A}$ is a string of $\Sigma^*Q$. The generalized left circular one-way jumping relation, denoted as $\prescript{}{{\mathcal{A}}}{\circlearrowleft}$, or just $\circlearrowleft$ if it is clear which $\it{GLCOWJFA}$ is being referred, over $\Sigma^*Q$ is defined as follows: For a state $p \in Q$, set $\Sigma_p=\Sigma_{R,p}=\{w \in \Sigma^+ 
: (q,w,p) \in R$ for some $q \in Q\}$. Let $u,v \in \Sigma^* $ and $(q,x,p) \in R$. Then, the $\it{GLCOWJFA}$ $\mathcal{A}$ makes a jump from the configuration $vxup$ to the configuration $uvq$, written as $$uvq \circlearrowleft vxup$$ 
if $u$ does not contain any word from $\Sigma_p$ as a subword, i.e., $u \neq u'wu''$,
where $u',u'' \in \Sigma^*,~w \in \Sigma_p$ and $x_2u_1 \neq x$, where $u_1, x_2 \in \Sigma^+$ and $u=u_1u_2,~x=x_1x_2$ for some $u_2,x_1 \in \Sigma^*$. The language accepted by the $\it{GLCOWJFA}$ $\mathcal{A}$ is $$L_{GLC}(\mathcal{A})=\{w \in \Sigma^* : q_f~ \prescript{*}{}{\circlearrowleft}~ wq_0 ~ for~some ~ q_f \in F\}.$$
\end{definition}

We give examples illustrating Definitions \ref{grcd} and \ref{glcd}.

\begin{example}
Consider the automaton from Example \ref{exgl}.
$$
\begin{tikzpicture}
\node[state, initial] (q0) {$q_0$};
\node[state, accepting, right of=q0, xshift=1.5cm] (q1) {$q_1$};
\path[->] (q0) edge[loop above] node{c} (q0)
(q0) edge  [above] node{ab} (q1);
\end{tikzpicture}
$$
We first consider the automaton as a $\it{GRCOWJFA}$. Then, the set of rules is $R=\{(q_0,c,q_0),(q_0,ab,q_1)\}.$ Hence, $\Sigma_{q_0}=\{c,ab\}$ and $\Sigma_{q_1}= \emptyset$.
 Since $\Sigma_{q_0}=\{c,ab\}$, the automaton at $q_0$ can neither jump over a `$c$' nor over `$ab$'. The automaton can delete arbitrary number of $c$'s at state $q_0$ and can jump over $`a$' or $`b$' at $q_0$. So any word in the language will be of the form $c^lac^mbc^n$ or $c^lbc^mac^n$. We can divide these words in 8 groups $\{l=m=n=0\}$, $\{l=0, m=0, n \geq 1\}$, $\{l=0, m \geq 1, n=0\}$, $\{l=0, m \geq 1, n \geq 1 \}$, $\{l \geq 1, m=n=0\}$, $\{l \geq 1, m=0, n \geq 1\}$, $\{l \geq 1, m \geq 1, n=0\}$, $\{l \geq 1, m \geq 1, n \geq 1 \}$. And using case-by-case analysis, we conclude that the language accepted by the automaton is

\vspace{0.2cm}

\noindent $L_{GRC}(\mathcal{A})=\{c^lab~|~ l \geq 0 \} \cup \{c^lac^mbc^n~|~ l \geq 0, m,n \geq 1\} \cup \{c^lbc^ma~|~ l \geq 0, m \geq 1\}.$

Now, consider the automaton as $\it{GLCOWJA}$. In this case, the set of rules is $R=\{(q_0,c,q_0),(q_1,ab,q_0)\}$. It can be shown, using the above technique, the language accepted by the automaton $\it{GLCOWJFA}$ is 

\vspace{0.2cm}
\noindent $L_{GLC}(\mathcal{A})=\{abc^n~|~ n \geq 0 \} \cup \{c^lac^mbc^n~|~ l,m \geq 1, n \geq 0\} \cup \{bc^mac^n~|~ m \geq 1, n \geq 0\}.$

\end{example}

\begin{note}
Note that the language accepted by the automaton of Example \ref{exgl}, when it is considered as $\it{GJFA}$ (\cite{JFA2012}), is $L=\{c^lac^mbc^n~:~ l,m,n \geq 0\}$. The language accepted differs in $\it{GJFA}$, $\it{GRLOWJFA}$, $\it{GRCOWJFA}$, $\it{GLLOWJFA}$, $\it{GLCOWJFA}$ because of different ways of processing the same input.
\end{note}
\noindent Now, we give and example that shows that the language classes $\bf{GRCOWJ}$ and $\bf{GLCOWJ}$ are not disjoint, where $\bf{GRCOWJ}$ and $\bf{GLCOWJ}$ are the language classes accepted by $\it{GRCOWJFA}$ and $\it{GLCOWJFA}$ respectively.
\begin{example}
Consider the automaton
$$
\begin{tikzpicture}
\node[state, initial] (q0) {$q_0$};
\node[state, accepting, right of=q0, xshift=1cm] (q1) {$q_1$};
\path[->] (q0) edge[above] node{ab} (q1)
(q1) edge[loop above] node{c} (q1);
\end{tikzpicture}
$$
The automaton cannot make any move without deleting $ab$ and at the state $q_1$ it can delete any number of $c$'s. Hence, the language accepted by the automaton is 
$$L_{GRC}(\mathcal{A})=L_{GLC}(\mathcal{A})=L_{GRL}(\mathcal{A})=L_{GLL}(\mathcal{A})=c^*abc^*.$$
\end{example}
Hence, we conclude the following.

\begin{note}
$\bf{GRCOWJ} \cap \bf{GLCOWJ}  \cap \bf{GRLOWJ} \cap \bf{GLLOWJ} \neq \emptyset$.
\end{note}

From the Definition \ref{grcd} of $\it{GRCOWJFA}$, we observe the following:
\begin{lemma}\label{confi}
Let $\mathcal{A}=(\Sigma,Q,q_0,F,R)$ be a $\it{GRCOWJFA}$. Let $w_1,w_2 \in \Sigma^*$ and $(p,x,q) \in R$. If the automaton goes from a configuration $pw_1xw_2$ to a configuration $qw_2w_1$, then the automaton also goes to the configuration $qw_2w_1$ from the configuration $pxw_2w_1$, i.e., if $pw_1xw_2$ $\circlearrowright$ $qw_2w_1$, then $pxw_2w_1$ $\circlearrowright$ $qw_2w_1$.
\end{lemma}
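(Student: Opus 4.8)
The plan is to unwind the definition of the generalized right circular one-way jumping relation $\circlearrowright$ twice, once for each of the two moves in the statement, and compare the side conditions. Recall that a single move $pW_1 \circlearrowright qW_2$ requires a rule $(p,x,q) \in R$ together with a factorization of the configuration word as $W_1 = uxv$ with $W_2 = vu$, subject to the constraint that $u$ contains no word of $\Sigma_p$ as a subword and that no prefix $u_2$ of a suffix split of $u$ together with a prefix $x_1$ of $x$ reconstitutes $x$ (the ``nearest available subword'' condition). So the first step is: given the hypothesis $pw_1xw_2 \circlearrowright qw_2w_1$, read off from the definition that the particular move being applied uses the rule $(p,x,q)$ with the factorization $u = w_1$, (the displayed) $x = x$, $v = w_2$; hence $w_1$ contains no subword from $\Sigma_p$, and $w_1$ is ``minimal'' in the nearest-available sense with respect to this occurrence of $x$.

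Next I would exhibit the second move explicitly. Starting from the configuration $pxw_2w_1$, I apply the same rule $(p,x,q)$ with the factorization $u = \lambda$, the middle factor $= x$, and $v = w_2w_1$; this sends $pxw_2w_1 = p\,\lambda\,x\,(w_2w_1)$ to $q\,(w_2w_1)\,\lambda = qw_2w_1$, which is exactly the target configuration. It then remains to verify that the side conditions of Definition~\ref{grcd} are met for this factorization. The first condition asks that $u$ contain no word of $\Sigma_p$ as a subword; here $u = \lambda$, and since every element of $\Sigma_p$ lies in $\Sigma^+$, the empty word contains none of them, so this is immediate. The ``nearest available subword'' condition is of the form ``$u_2 x_1 \neq x$ for all splits $u = u_1 u_2$, $x = x_1 x_2$ with $u_2, x_1 \in \Sigma^+$''; with $u = \lambda$ there is no way to write $u = u_1 u_2$ with $u_2 \in \Sigma^+$, so the condition holds vacuously. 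Therefore the move $pxw_2w_1 \circlearrowright qw_2w_1$ is legal, which is the conclusion.

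The only genuine subtlety — and the one step I would be most careful about — is making sure that the hypothesis $pw_1xw_2 \circlearrowright qw_2w_1$ really does force the intended parsing, namely that the rule being used is $(p,x,q)$ with this specific occurrence of the factor $x$ sitting after $w_1$. In principle a configuration could admit the relation via a different rule or a different occurrence of a deletable factor; however, the way the lemma is phrased, with the target configuration written as $qw_2w_1$ and the rule $(p,x,q)$ named in advance, we are entitled to assume that this is the move under consideration, and the statement is really ``this particular move can also be realized from the shifted configuration.'' So I would state at the outset that we fix the rule $(p,x,q)$ and the occurrence of $x$ used in the given move, and then the argument above goes through verbatim. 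No case analysis or calculation is needed beyond checking that $\lambda$ vacuously satisfies the two side constraints, which is why the lemma is recorded as an immediate observation rather than a theorem.
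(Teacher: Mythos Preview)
Your proof is correct and matches the paper's treatment: the paper records this lemma as an immediate observation from Definition~\ref{grcd} without spelling out any argument, and your verification that the factorization with $u=\lambda$ vacuously satisfies both side conditions is exactly the content of that observation. Your remark about the intended parsing is a reasonable caveat, and your resolution (that the rule and occurrence are fixed by the statement) is the right one.
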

A similar observation can also be made for the model $\it{GLCOWJFA}$.

\section{GCOWJFA and GLOWJFA}\label{gcgl}

In this section, we compare the language class $\bf{GRCOWJ}$ with the language classes $\bf{ROWJ}$ and $\bf{GRLOWJ}$, where $\bf{ROWJ}$ and $\bf{GRLOWJ}$ are the language classes accepted by $\it{ROWJFA}$ and $\it{GRLOWJFA}$ respectively. The comparison between language classes $\bf{ROWJ}$ and $\bf{GRLOWJ}$ was done in \cite{glowjfa} and it was proved that $\bf{ROWJ \subset GRLOWJ}$ \cite{glowjfa}. Here, we show that the language class $\bf{ROWJ}$ is properly contained in the language class $\bf{GRCOWJ}$. Similarly, it can be shown that $\bf{LOWJ} \subset$ $\bf{GLCOWJ}$. We also show that $\bf{GRLOWJ}$ is not a part of $\bf{GRCOWJ}$. 

\noindent It is clear from the definitions of $\it{ROWJFA}$ and $\it{GRCOWJFA}$/$\it{GRLOWJFA}$ that when the rules of a $\it{GRCOWJFA}$/$\it{GRLOWJFA}$ satisfies the condition: if $(p,w,q) \in R$, then $|w|=1$. Then, the $\it{GRCOWJFA}$/$\it{GRLOWJFA}$ is nothing but $\it{ROWJFA}$. Hence, $\bf{ROWJ}$ $\subseteq$  $\bf{GRCOWJ} \cap \bf{GRLOWJ}$. Similarly, $\bf{LOWJ}$ $\subseteq$  $\bf{GLCOWJ}$ $\cap$ $\bf{GLLOWJ}$. Now, we give a language that is in the class $\bf{GRCOWJ}$ but not in the class $\bf{ROWJ}$.

\begin{example}\label{nonrowj}
(\cite{glowjfa}) Consider the $\it{GRCOWJFA}$  $\mathcal{A}=(\{a,b\}, \{q_0,q_1,q_2,q_3,q_4\},q_0,$  $\{q_0,q_1,$ $q_2,q_4\},$ $R)$, where the set of rules is $R=\{(q_0,a,q_1),(q_1,b,q_2),(q_2,a,q_3),$ $(q_3,b,q_2),$ $(q_0,aa,q_4),$ $(q_4,a,q_4)\}$. 
The language accepted by the automaton is

$L_{GRC}(\mathcal{A})=\{w \in \{a,b\}^*~|~|w|_a=|w|_b$ or $|w|_b=0\}$.
\end{example}
It was proved in \cite{porowjfa} that the language of Example \ref{nonrowj} cannot be accepted by any $\it{ROWJFA}$. It was shown in \cite{glowjfa} that the language is in the class $\bf{GRLOWJ}$. Hence, by Example \ref{nonrowj} and definitions of $\it{GRCOWJFA}$ and $\it{GRLOWJFA}$, we have the following result.

\begin{note}\label{owjgcowj}
$\bf{ROWJ} \subset \bf{GRCOWJ} \cap \bf{GRLOWJ}$.
\end{note}
Similarly, it can be shown that $\bf{LOWJ} \subset \bf{GLCOWJ} \cap \bf{GLLOWJ}$.

Now, we give a language that is in the language class $\bf{GRLOWJ}$ but not in the class $\bf{GRCOWJ}$.
\begin{example}\label{GRLD}
(\cite{glowjfa}) Consider the $\it{GRLOWJFA}$ automaton $\mathcal{A}=(\{a,b\},\{q_0\},q_0,$ $\{q_0\},$ $\{(q_0,ab,q_0)\})$.
The language accepted by the automaton is the Dyck language, $D$. Hence, $D \in$ $\bf{GRLOWJ}$.
\end{example}
Now, we prove that the Dyck language cannot be accepted by any $\it{GRCOWJFA}$.

\begin{lemma}\label{GRCD}
There does not exist any $\it{GRCOWJFA}$ that accepts the Dyck language, i.e., $D \notin \bf{GRCOWJ}.$
\end{lemma}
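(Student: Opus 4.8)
The plan is to argue by contradiction: suppose some $\it{GRCOWJFA}$ $\mathcal{A}=(\Sigma,Q,q_0,F,R)$ with $\Sigma=\{a,b\}$ satisfies $L_{GRC}(\mathcal{A})=D$, the Dyck language (with $a$ as the open bracket and $b$ as the close bracket). First I would record the basic structural facts we get for free: since $\lambda\in D$ we need $q_0\in F$; since $a^k\notin D$ for $k\ge 1$ but $a^kb^k\in D$, the automaton must be able to make progress on well-balanced words yet get stuck (or land in a non-final configuration) on $a$-heavy prefixes. The key quantitative handle will be the maximal length $m=\max\{|w| : (p,w,q)\in R\}$ of a word appearing in a rule; this $m$ is a fixed constant of $\mathcal{A}$.

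The heart of the argument is to exploit the circular re-permutation of the tape, which is exactly the feature that distinguishes $\it{GRCOWJFA}$ from $\it{GRLOWJFA}$. The idea is to feed $\mathcal{A}$ a word of the form $w = a^{n}b^{n}$ for $n$ large compared to $|Q|$ and $m$, and to track what configurations $\mathcal{A}$ can reach. Because every rule deletes a contiguous factor of bounded length $m$ and then cyclically rotates the remainder, I would like to show that the multiset of letters on the tape always stays ``balanced enough'' — more precisely, I would prove an invariant: starting from $q_0 a^n b^n$, every reachable configuration $p u$ (with $u$ the remaining tape content, read circularly) has $|u|_a = |u|_b$ together with the cyclic-word structure of $u$ being a rotation of $a^{n-k}b^{n-k}$ for some $k$, i.e. $u$ is of the form $b^i a^{n-k} b^{n-k-i}$ (a single maximal block of $a$'s). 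The point is that a deletion can only remove a factor straddling at most one $a$-block boundary, so the ``single run of $a$'s'' shape is preserved up to rotation, and since the input is balanced and every accepted word of $D$ must be fully consumed to reach a final state, no rule used along an accepting computation can destroy the letter balance. Then I would contrast this with a word like $b a^n b^{n-1}$ (a rotation of $a^n b^n$): this word is \emph{not} in $D$, yet because the $\it{GRCOWJFA}$ at its very first step may delete a factor of $w=a^nb^n$ and cyclically rotate, the reachable-configuration set from $q_0 b a^n b^{n-1}$ should, after one bounded-length deletion near the boundary, coincide with a configuration reachable from $q_0 a^n b^n$ — forcing $b a^n b^{n-1}\in L_{GRC}(\mathcal{A})$ as well, a contradiction.

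Concretely, the pumping-style core I would aim to isolate is: among the $n+1$ configurations $q_0 a^n b^n, q_0 a^{n-1} b^{n-1}, \dots$ reached by peeling off one $a$ and one $b$ in some order (if the accepting computation ever does this for the string $a^nb^n$), two must repeat a state by the pigeonhole principle once $n > |Q|\cdot m$, giving a cycle in the computation; I would then argue that splicing this cycle into the computation on a suitable non-Dyck input such as $a^{n+1}b^n$ or a nontrivial rotation yields an accepting computation on a word outside $D$. The alternative failure mode — that the accepting computation on $a^nb^n$ does something other than peeling balanced pairs (e.g. deletes $aa$ or $bb$ factors) — is handled by the ``single $a$-run rotation'' invariant above, which rules out ever deleting a factor containing two $a$'s unless $n-k$ has dropped to a bounded value, so for large $n$ the computation is essentially forced.

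The main obstacle I anticipate is making the invariant airtight in the presence of the ``nearest available subword'' condition in Definition \ref{grcd}: a deletion is only legal if the skipped prefix $u$ contains no rule-word as a subword and does not even share an overlapping suffix/prefix with the deleted word $x$, so I cannot freely assume $\mathcal{A}$ deletes wherever I want — I have to show the forced deletions are compatible with this restriction, or conversely that the restriction itself cannot prevent the unwanted acceptance of a rotation. I would manage this by choosing $n$ much larger than $2m$ and by working with inputs whose $a$-block and $b$-block are each far longer than $m$, so that within a single block the ``nearest available subword'' for any state either doesn't exist (and the head must rotate, which is harmless) or sits at a predictable position; the delicate case is a rule-word that straddles the $ab$ or $ba$ boundary, and there only $O(1)$ many positions are relevant, which I can analyze by a finite case check on the boundary window of width $2m$. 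Appealing, if needed, to Lemma \ref{confi} to normalize configurations of the form $p w_1 x w_2$ to $p x w_2 w_1$ should simplify bookkeeping. Once the invariant and the pigeonhole cycle are in place, the contradiction — a non-Dyck word accepted by $\mathcal{A}$ — follows, establishing $D\notin\bf{GRCOWJ}$.
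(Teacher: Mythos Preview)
Your overall strategy --- contradiction, work with $a^nb^n$ for $n$ large relative to $|Q|$ and $m$, invoke Lemma~\ref{confi}, and finish with a pigeonhole/pumping argument --- is the right skeleton and matches the paper's. But the central invariant you propose is simply false: it is \emph{not} the case that every configuration reachable from $q_0a^nb^n$ satisfies $|u|_a=|u|_b$ and is a rotation of $a^{n-k}b^{n-k}$. Nothing prevents $\mathcal{A}$ from having a rule $(q_0,a,q_1)$, which immediately yields the unbalanced $q_1a^{n-1}b^n$; your justification (``every accepted word must be fully consumed \ldots\ no rule used along an accepting computation can destroy the letter balance'') is circular, since a computation may delete $a^2$ now and $b^2$ later and still reach $q_f$. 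The ``single $a$-run'' argument you then invoke to dispose of the alternative failure mode rests on this same unproven invariant, so that branch is unsupported as well.

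The paper's proof sidesteps all of this with a cleaner dichotomy, and the key step you are missing is the following use of Lemma~\ref{confi}: if \emph{any} intermediate configuration in the accepting run on $a^nb^n$ has the form $q'w_1a$ (i.e.\ ends in $a$), then repeated backward application of Lemma~\ref{confi} produces a word $ww_1a$ with $q_0ww_1a\circlearrowright^*q'w_1a\circlearrowright^*q_f$, so a word ending in $a$ is accepted --- an immediate contradiction. Hence every intermediate configuration ends in $b$, which forces the automaton to consume all $a$'s (as prefixes) before any $b$; now $n>|Q|\cdot m$ gives a loop consisting only of $a$'s, and pumping that loop accepts some $a^{n_1}b^n$ with $n_1\neq n$. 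Your rotation idea with $ba^nb^{n-1}$ is morally the ``ends in $a$'' observation in disguise; replace the invariant by this direct argument and the proof goes through.
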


\begin{proof}
If $D \in \bf{GRCOWJ}$, then there exists a $\it{GRCOWJFA}$ $\mathcal{A}=(\Sigma,Q,$ $q_0,$ $F,$ $R)$ such that $L(\mathcal{A})=D$. Let $l=|Q|$ and $m=max\{|w|~:~ (p,w,q) \in R\}$. Choose $n > lm$. We know $a^nb^n \in D=L(\mathcal{A})$. Then, there exists a sequence of transitions such that $q_0a^nb^n \circlearrowright^* q_f$, where $q_f \in F$. In order to reach the final configuration $q_f$ from the initial configuration $q_0a^nb^n$, there will be intermediate configurations of the form $q'w_1a$ or $q''w_2b$ or both, where $q',q'' \in Q, w_1,w_2 \in \Sigma^*$.

\begin{itemize}
    \item[Case 1.] If there is an intermediate configuration of the form $q'w_1a$. Then, the automaton will make the following sequence of transitions $q_0a^nb^n \circlearrowright^* q'w_1a \circlearrowright^* q_f$. Then, by repeated application of Lemma \ref{confi}, we have a word $w \in \Sigma^*$ such that $q_0ww_1a \circlearrowright^* q'w_1a \circlearrowright^* q_f$. Hence, $ww_1a \in L(\mathcal{A})$ but $ww_1a \not\in D$ because no word of $D$ ends with `$a$'. Hence, there cannot be any intermediate configuration of the form $q'w_1a$. 
    
    \item[Case 2.] Since there cannot be any intermediate configuration of the form $q'w_1a$, all intermediate configurations have the form $q''w_2b$. In this case, the automaton will delete all $a$'s first and then $b$'s, starting from the initial configuration $q_0a^nb^n$, otherwise the automaton will go to a configuration of the form $q'w_1a$ because of Lemma \ref{confi}. Because of the choice of $n$, there will be a loop consisting of only $a$'s to delete all $a$'s without deleting $b$'s, starting from the initial configuration. The same loop can be used multiple times to create a mismatch in the indices of $`a$' and $`b$'. Hence, there will be a word of the form $a^{n_1}b^{n}$, $n_1 \neq n$, in $L(\mathcal{A})$ but $a^{n_1}b^{n} \not\in D$ for $n_1 \neq n$.
\end{itemize}
From the above discussed cases, we conclude that $D \not\in \bf{GRCOWJ}$.
\end{proof}
We conclude the following from Example \ref{GRLD} and Lemma \ref{GRCD}.
\begin{proposition}
The language class $\bf{GRLOWJ}$ is not contained in the language class $\bf{GRCOWJ}$, i.e., $\bf{GRLOWJ}$ $\not\subseteq$ $\bf{GRCOWJ}$. Similarly, it can be proved that $\bf{GLLOWJ}$ $\not\subseteq$ $\bf{GLCOWJ}$. 
\end{proposition}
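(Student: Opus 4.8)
The plan is to deduce the proposition directly from the two facts already in hand: Example~\ref{GRLD} exhibits the Dyck language $D$ as a member of $\bf{GRLOWJ}$, while Lemma~\ref{GRCD} shows $D \notin \bf{GRCOWJ}$. Together these give a witness language separating the two classes in one direction, hence $\bf{GRLOWJ} \not\subseteq \bf{GRCOWJ}$. So the first part of the proposition needs only this one-line synthesis; the real content is the symmetric claim $\bf{GLLOWJ} \not\subseteq \bf{GLCOWJ}$, and the work lies in producing a language witnessing \emph{that} separation.

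For the left-handed statement I would mirror the entire argument. First I would observe that the reversal $D^R$ of the Dyck language (equivalently, the same bracket-balance language read right-to-left) is accepted by the one-state $\it{GLLOWJFA}$ with the single rule $(q_0,ab,q_0)$ (or $(q_0,ba,q_0)$, depending on which bracket convention one fixes), exactly paralleling Example~\ref{GRLD}: the left linear machine peels off an adjacent matched pair from the right end, returns to the right end when stuck, and accepts iff everything cancels. This puts the relevant Dyck-type language into $\bf{GLLOWJ}$. Then I would prove the analogue of Lemma~\ref{GRCD}: no $\it{GLCOWJFA}$ accepts this language. The proof transfers verbatim once one has the left-circular analogue of Lemma~\ref{confi} — which the paper already asserts holds (``A similar observation can also be made for the model $\it{GLCOWJFA}$''). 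Using that analogue, the same two-case dichotomy applies: an intermediate configuration with a ``wrong'' symbol at the head forces a word outside the language via repeated application of the $\it{GLCOWJFA}$ analogue of Lemma~\ref{confi}, and otherwise, on an input of the form $a^n b^n$ with $n$ exceeding $|Q|\cdot\max\{|w|:(q,w,p)\in R\}$, a pumping argument on a loop that deletes only $a$'s (or only $b$'s) creates an index mismatch, contradicting membership.

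I expect the main obstacle to be purely notational rather than mathematical: one must be careful about which reversed bracket language the left-circular machine genuinely accepts, since a $\it{GLCOWJFA}$ processes anti-clockwise and the circular-permutation bookkeeping in Definition~\ref{glcd} differs subtly from the linear case in Definition~\ref{glld}. In particular I would double-check that the $\it{GLLOWJFA}$ with rule $(q_0,ab,q_0)$ does accept $D^R$ (and not $D$ itself or something else), and that Lemma~\ref{GRCD}'s endpoint obstruction (``no word of $D$ ends with $a$'') has the correct mirror form (``no word begins with the closing bracket,'' or whichever boundary condition survives reversal). Once the correct witness language and the correct boundary property are pinned down, the rest is a direct translation of the already-proved results. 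Hence the proof I would write is short: cite Example~\ref{GRLD} and Lemma~\ref{GRCD} for the first assertion, then state the reversed witness language, invoke the $\it{GLCOWJFA}$ analogue of Lemma~\ref{confi}, and run the two-case argument of Lemma~\ref{GRCD} in mirror image to conclude $\bf{GLLOWJ} \not\subseteq \bf{GLCOWJ}$.
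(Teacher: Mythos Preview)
Your proposal is correct and matches the paper's approach exactly: the paper's entire proof is the single sentence ``We conclude the following from Example~\ref{GRLD} and Lemma~\ref{GRCD},'' which is precisely your one-line synthesis for the first assertion. You actually go further than the paper by sketching the mirror argument for $\bf{GLLOWJ} \not\subseteq \bf{GLCOWJ}$, which the paper leaves entirely to the reader with the phrase ``Similarly, it can be proved''; your outline of that symmetric case (reversed Dyck witness, left-circular analogue of Lemma~\ref{confi}, two-case dichotomy) is sound and is indeed the intended translation.
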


\section{GRCOWJFA and GLCOWJFA}\label{grcglc}

In this section, the comparison between the language classes $\bf{GRCOWJ}$ and $\bf{GLCOWJ}$ has been done. We begin the section with a relationship that connects the languages of the class $\bf{GRCOWJ}$ and the languages of the class $\bf{GLCOWJ}$.

First we recall a similar result from \cite{glowjfa}.
\begin{lemma}
For a language $L_1 \in \bf{GRLOWJ}$, there exists a language $L_2 \in \bf{GLLOWJ}$ such that $L_1^R=L_2$ and vice versa.
\end{lemma}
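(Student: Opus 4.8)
\emph{Proof plan.} The plan is to prove both directions at once by a single rule-reversing construction, showing that a GRLOWJFA and the corresponding GLLOWJFA accept reversed languages because their computations match under a configuration-reversing bijection. I would fix a $\it{GRLOWJFA}$ $\mathcal{A}_1=(\Sigma,Q,q_0,F,R_1)$ with $L_{GRL}(\mathcal{A}_1)=L_1$ (Definition \ref{grld}) and define the $\it{GLLOWJFA}$ $\mathcal{A}_2=(\Sigma,Q,q_0,F,R_2)$ (Definition \ref{glld}) by reversing every rule, i.e. $R_2=\{(q,x^R,p) : (p,x,q)\in R_1\}$. The first observation is the skip-set identity $\Sigma_{R_2,p}=(\Sigma_{R_1,p})^R$ for every $p\in Q$, which is immediate from the construction. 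Since a word $w$ is a subword of $u$ if and only if $w^R$ is a subword of $u^R$, the restriction ``$u$ contains no word of $\Sigma_{R_1,p}$'' is equivalent to ``$u^R$ contains no word of $\Sigma_{R_2,p}$''; thus the jump-over conditions of the two machines mirror each other under reversal.

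The core of the argument is the configuration map $\phi(\alpha p\beta)=\beta^R p\,\alpha^R$ on $\Sigma^*Q\Sigma^*$, which is an involution and hence a bijection. I would establish the one-step correspondence
$$ C \curvearrowright_{\mathcal{A}_1} C' \iff \phi(C') \curvearrowleft_{\mathcal{A}_2} \phi(C), $$
by checking both clauses of the jumping relations. For the deletion clause, a step $tpuxv \curvearrowright tuqv$ with $(p,x,q)\in R_1$ has images $\phi(tpuxv)=v^Rx^Ru^Rp\,t^R$ and $\phi(tuqv)=v^R q\,u^R t^R$, and since $(q,x^R,p)\in R_2$ this is exactly a $\it{GLLOWJFA}$ deletion step, provided the side conditions transfer. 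For the wrap-around clause, $xpy \curvearrowright pxy$ maps to $y^R p\,x^R$ and $y^R x^R p$, which is a $\it{GLLOWJFA}$ wrap step whose condition on $y^R$ is supplied by the skip-set identity, the constraint $x\in\Sigma^+$ becoming $x^R\in\Sigma^+$. Finally the initial configuration $q_0w$ maps to $w^R q_0$ and the accepting configuration $q_f$ maps to $q_f$, so the endpoints are respected.

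I expect the main obstacle to be verifying that the ``nearest available subword'' side conditions of Definitions \ref{grld} and \ref{glld} match under $\phi$, since they are stated with opposite factorizations. The plan here is explicit reversal bookkeeping: writing $U=u^R$ and $X=x^R$, a $\it{GLLOWJFA}$ factorization $U=U_1U_2$, $X=X_1X_2$ corresponds to $u=U_2^R U_1^R$ and $x=X_2^R X_1^R$, under which the $\it{GRLOWJFA}$ quantity $u_2x_1$ equals $(X_2U_1)^R$. Hence the forbidden equality $u_2x_1=x$ of the $\it{GRLOWJFA}$ is equivalent to the forbidden equality $X_2U_1=X$ of the $\it{GLLOWJFA}$, and the nonemptiness constraints $u_2,x_1\in\Sigma^+$ correspond exactly to $U_1,X_2\in\Sigma^+$. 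This is the only genuinely fiddly point; once it is checked, the deletion clause of the one-step correspondence is complete.

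With the one-step correspondence in hand, applying it along a computation and using that $\phi$ is a bijection on configurations yields $q_0w \curvearrowright^* q_f$ in $\mathcal{A}_1$ if and only if $q_f\,\prescript{*}{}{\curvearrowleft}\, w^R q_0$ in $\mathcal{A}_2$. Therefore $v\in L_{GLL}(\mathcal{A}_2)$ if and only if $v^R\in L_1$, that is, $L_{GLL}(\mathcal{A}_2)=L_1^R$, which gives the forward direction with $L_2=L_1^R\in\mathbf{GLLOWJ}$. The converse is the symmetric statement: starting from a $\it{GLLOWJFA}$ accepting $L_2$ and reversing all its rules produces a $\it{GRLOWJFA}$ accepting $L_2^R$ by the same bijective argument, so setting $L_1=L_2^R$ gives $L_1\in\mathbf{GRLOWJ}$ with $L_1^R=L_2$, completing the proof.
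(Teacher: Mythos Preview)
Your proposal is correct and is exactly the natural rule-reversing argument. Note, however, that the present paper does not itself prove this lemma: it is only \emph{recalled} from \cite{glowjfa} and stated without proof. The paper does prove the analogous circular statement (the next Proposition, via Lemma~\ref{xpyux}) using precisely the construction you describe, namely $R'=\{(q,w^R,p):(p,w,q)\in R\}$ together with a one-step configuration correspondence; your argument is the linear-tape counterpart of that proof, with the extra wrap-around clause handled correctly. So your approach coincides with the paper's method for the circular case, and presumably with the proof in \cite{glowjfa} as well.
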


Now, we show that for every language $L_1 \in \bf{GRCOWJ}$ $(\bf{GLCOWJ})$, there exists a language $L_2 \in \bf{GLCOWJ}$ $(\bf{GRCOWJ}$ resp) such that $L_1^R=L_2$.

\begin{proposition}
For a language $L_1 \in \bf{GRCOWJ}$, there exists a language $L_2 \in \bf{GLCOWJ}$ such that $L_1^R=L_2$ and vice versa.
\end{proposition}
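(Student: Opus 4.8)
The plan is to prove this by a direct automaton construction: given a $\it{GRCOWJFA}$ $\mathcal{A}$ with $L_{GRC}(\mathcal{A}) = L_1$, build a $\it{GLCOWJFA}$ $\mathcal{B}$ with $L_{GLC}(\mathcal{B}) = L_1^R$, and then observe that the reverse construction is entirely symmetric (so the ``vice versa'' comes for free, or one applies the same construction to the reverse of a $\it{GLCOWJFA}$). The natural guess is to keep the state set, initial state, and final states of $\mathcal{A}$ unchanged, and to replace each rule $(p, w, q) \in R_{\mathcal{A}}$ by the rule $(q, w^R, p) \in R_{\mathcal{B}}$; note this matches the $\it{GLCOWJFA}$ convention that a rule is written $(q, x, p)$ with the automaton moving from $p$ to $q$ while deleting $x$, so the rule $(q, w^R, p)$ of $\mathcal{B}$ means ``from $p$, delete $w^R$, go to $q$,'' which mirrors ``from $p$, delete $w$, go to $q$'' in $\mathcal{A}$. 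First I would check the determinism/well-formedness condition: since $\mathcal{A}$ has at most one outgoing rule $(p,w,\cdot)$ per pair $(p,w)$, the map $w \mapsto w^R$ is a bijection on $\Sigma^+$, so $\mathcal{B}$ also has at most one rule from $p$ deleting any fixed word, hence $\mathcal{B}$ is a legitimate $\it{GLCOWJFA}$.

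The heart of the argument is a configuration-level correspondence. I would prove the invariant: for any $u,v \in \Sigma^*$ and any state $p$,
$$p\,uv \;\circlearrowright_{\mathcal{A}}^* \; q_f \quad\Longleftrightarrow\quad q_f \; \prescript{*}{}{\circlearrowleft_{\mathcal{B}}}\; (uv)^R p \, .$$
More precisely, I would establish the single-step claim: $p u x v \circlearrowright_{\mathcal{A}} q v u$ in $\mathcal{A}$ if and only if $(vu)^R q \circlearrowleft_{\mathcal{B}} (puxv$'s reverse-with-state$)$, i.e.\ $u^R v^R q \circlearrowleft_{\mathcal{B}} v^R x^R u^R p$. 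Here one must verify that the side conditions line up: the $\it{GRCOWJFA}$ jump $puxv \circlearrowright qvu$ requires that $u$ contain no subword from $\Sigma_{\mathcal{A},p}$ and that no proper ``straddling'' factorization $u_2 x_1 = x$ occurs; correspondingly the $\it{GLCOWJFA}$ jump from $v^R x^R u^R p$ requires that $u^R$ contain no subword of $\Sigma_{\mathcal{B},p}$ (which equals $\{w^R : w \in \Sigma_{\mathcal{A},p}\}$, and $u$ contains a factor $w$ iff $u^R$ contains the factor $w^R$) and the analogous no-straddle condition on the $x_2 u_1$ side, which is precisely the reverse of the $\mathcal{A}$-condition $u_2 x_1 \neq x$. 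Once the single-step equivalence is in place, both inclusions $L_1^R \subseteq L_{GLC}(\mathcal{B})$ and $L_{GLC}(\mathcal{B}) \subseteq L_1^R$ follow by induction on the length of the computation, using that the initial configuration $q_0 w$ of $\mathcal{A}$ corresponds to $w^R q_0$ of $\mathcal{B}$ and that the accepting configuration $q_f$ (empty tape) corresponds to $q_f$ itself.

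The step I expect to be the main obstacle is checking that the ``nearest available subword'' side conditions in Definitions~\ref{grcd} and \ref{glcd} are exact mirror images of each other under the reversal map, rather than merely ``morally'' so. The subtlety is that the $\it{GRCOWJFA}$ condition forbids $u_2 x_1 = x$ with $u = u_1 u_2$, $x = x_1 x_2$, $u_2, x_1 \in \Sigma^+$ (the head should not have skipped past a position where an occurrence of $x$ started earlier), whereas the $\it{GLCOWJFA}$ condition forbids $x_2 u_1 = x$ with $u = u_1 u_2$, $x = x_1 x_2$, $u_1, x_2 \in \Sigma^+$; one has to confirm that applying $w \mapsto w^R$ to all words and swapping the roles of prefix/suffix turns one condition precisely into the other, including the non-emptiness constraints on the pieces. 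I would handle this by writing out, on both sides, what a factorization witnessing a violation looks like and exhibiting the explicit bijection between such witnesses (sending $(u_1, u_2, x_1, x_2)$ to $(u_2^R, u_1^R, x_2^R, x_1^R)$). After that, the ``and vice versa'' is immediate: given $L_2 \in \bf{GLCOWJ}$ accepted by some $\mathcal{C}$, the mirror construction (replace $(q, w, p)$ by $(p, w^R, q)$) yields a $\it{GRCOWJFA}$ accepting $L_2^R$, and taking $L_1 = L_2^R$ gives $L_1^R = L_2$ with $L_1 \in \bf{GRCOWJ}$.
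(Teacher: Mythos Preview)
Your proposal is correct and follows essentially the same approach as the paper: construct $\mathcal{B}$ from $\mathcal{A}$ by keeping $Q,q_0,F$ and replacing each rule $(p,w,q)$ by $(q,w^R,p)$, then establish the single-step equivalence $puxv \circlearrowright_{\mathcal{A}} qvu \Leftrightarrow u^Rv^Rq \;\prescript{}{\mathcal{B}}{\circlearrowleft}\; v^Rx^Ru^Rp$ by checking that the ``no subword of $\Sigma_p$'' and ``no straddling factorization'' side conditions are carried to one another under reversal, and conclude by induction on computation length. Your write-up is in fact slightly more careful than the paper's (you explicitly verify determinism of $\mathcal{B}$ and spell out the bijection of straddling witnesses), but the construction, the key single-step lemma, and the overall structure are identical.
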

\begin{proof}
Let $L_1 \in \bf{GRCOWJ}$. Then, there exists a $\it{GRCOWJFA}$, say $\mathcal{A}_1=(\Sigma,Q,q_0,F,R)$, such that $L_1=L(\mathcal{A}_1)$. We construct a $\it{GLCOWJFA}$, say $\mathcal{A}_2$, as: $\mathcal{A}_2=(\Sigma,Q,q_0,F,R')$, where $R'=\{(q,w^R,p)~|~ (p,w,q) \in R\}$. Let $L(\mathcal{A}_2)=L_2$. Claim, $L_1^R=L_2$. We have the following lemma:

\begin{lemma}\label{xpyux}
If $x,y \in \Sigma^*$, $u \in \Sigma^+$ and $p,q \in Q$, then $pxuy \circlearrowright_{\mathcal{A}_1} qyx$ if and only if $x^Ry^Rq$ $\prescript{}{{\mathcal{A}_2}}{\circlearrowleft}$  $y^Ru^Rx^Rp$.
\end{lemma}
\begin{proof}
Let $x,y \in \Sigma^*$, $u \in \Sigma^+$ and $p,q \in Q$. Now, $pxuy \circlearrowright_{\mathcal{A}_1} qyx$ if and only if $(p,u,q) \in R$, $x$ does not contain any word from $\Sigma_{R,p}$ as a subword and $x''u' \neq u$, where $x''$ is a nonempty suffix of $x$ and $u'$ is a nonempty prefix of $u$ if and only if $(q,u^R,p) \in R'$, $x^R$ does not contain any word from $\Sigma_{R',p}$ as a subword and $u^{(2)}x^{(1)} \neq u^R$, where $x^{(1)}$ is a nonempty prefix of $x^R$ and $u^{(2)}$ is a nonempty suffix of $u^R$ if and only if $x^Ry^Rq$ $\prescript{}{{\mathcal{A}_2}}{\circlearrowleft}$ $y^Ru^Rx^Rp$. Hence, for $x,y \in \Sigma^*$, $u \in \Sigma^+$ and $p,q \in Q$, $pxuy \circlearrowright_{\mathcal{A}_1} qyx$ if and only if $x^Ry^Rq$ $\prescript{}{{\mathcal{A}_2}}{\circlearrowleft}$   $y^Ru^Rx^Rp$. 
\end{proof}

Now, from Lemma \ref{xpyux}, $w \in L_1^R$ if and only if $q_0w^R \circlearrowright_{\mathcal{A}_1}^* q_f$, where $q_f \in F$, if and only if $q_f$ $\prescript{*}{{\mathcal{A}_2}}{\circlearrowleft}$ $wq_0$ if and only if $w \in L_2$. Hence, $L_1^R=L_2$.
\end{proof}

\begin{corollary}
If a language $L \in \bf{GRCOWJ \cap GLCOWJ}$, then the reversal of the language $L^R$ $\in$ $\bf{GRCOWJ \cap GLCOWJ}$.
\end{corollary}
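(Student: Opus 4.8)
The corollary follows almost immediately from the preceding proposition, so the plan is short. The key observation is that the proposition establishes a reversal-duality between the two classes: for every $L_1 \in \bf{GRCOWJ}$ there is $L_2 \in \bf{GLCOWJ}$ with $L_1^R = L_2$, and symmetrically for every $M_1 \in \bf{GLCOWJ}$ there is $M_2 \in \bf{GRCOWJ}$ with $M_1^R = M_2$. I will use this duality in both directions on a single language $L$ lying in the intersection.

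Concretely, suppose $L \in \bf{GRCOWJ} \cap \bf{GLCOWJ}$. First I apply the proposition to $L$ viewed as a member of $\bf{GRCOWJ}$: this yields a language in $\bf{GLCOWJ}$ whose reversal is $L$; since reversal is an involution ($({L^R})^R = L$), that language is exactly $L^R$, so $L^R \in \bf{GLCOWJ}$. Next I apply the proposition to $L$ viewed as a member of $\bf{GLCOWJ}$ (the ``vice versa'' direction): this yields a language in $\bf{GRCOWJ}$ whose reversal is $L$, i.e.\ $L^R \in \bf{GRCOWJ}$. Combining the two memberships gives $L^R \in \bf{GRCOWJ} \cap \bf{GLCOWJ}$, which is the claim.

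There is essentially no obstacle here; the only thing to be careful about is the bookkeeping of which direction of the proposition is being invoked and the use of the identity $(L^R)^R = L$ to convert ``there exists $L_2$ with $L^R = L_2$'' into ``$L^R$ itself lies in the class''. I would present the argument in two sentences, one per class, and conclude. If one prefers an explicit automaton-level statement, the same conclusion follows by directly applying the construction in the proof of the proposition to the two automata (a $\it{GRCOWJFA}$ and a $\it{GLCOWJFA}$) witnessing $L \in \bf{GRCOWJ} \cap \bf{GLCOWJ}$, producing a $\it{GLCOWJFA}$ and a $\it{GRCOWJFA}$ respectively for $L^R$, but this is merely unfolding the proposition and adds nothing new.
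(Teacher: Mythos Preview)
Your proposal is correct and takes essentially the same approach as the paper: the paper states the corollary without proof immediately after the proposition, so the intended argument is exactly the two-line application of the proposition (in both directions) that you describe.
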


Now, we compare the language class $\bf{GRCOWJ}$ with the language class $\bf{GLCOWJ}$. We give a language that is in the language class $\bf{GLCOWJ}$. We show that the language is not in the class $\bf{GRCOWJ}$. Hence, we conclude $\bf{GLCOWJ}$ is not contained in $\bf{GRCOWJ}$. Similarly, it can be shown that $\bf{GRCOWJ}$ is not contained in $\bf{GLCOWJ}$.

\begin{example}\label{Lc}
Consider the $\it{GLCOWJFA}$ $\mathcal{A}=(\{a,b,c\},\{q_0,q_1,q_2\},q_0,$ $\{q_1\},$ $R)$, where $R=\{(q_2,a,q_0), (q_2,b,q_0), (q_1,c,q_0),(q_1,ba,q_1)\}$. Here, $\Sigma_{q_0}=\{a,b,c\}$, $\Sigma_{q_1}=\{ba\}$,$\Sigma_{q_2}=\emptyset$.
We have the following observations regarding the language $L(\mathcal{A})$ of the automaton. Let $L(\mathcal{A})=L_c$.
\begin{itemize}
\item[$\bullet$] Note that all words of $L(\mathcal{A})$ will end with $c$. Suppose there is a word $w \in \Sigma^*$ that ends with $a$. Then, $w=w_1a$, where $w_1 \in \Sigma^*$. Since $a \in \Sigma_{q_0}$, the automaton cannot jump over `$a$' at $q_0$ and hence, after deleting `$a$' the automaton will go to the state $q_2$, i.e., $w_1q_2$ $\circlearrowleft$ $w_1aq_0$. Hence, $w=w_1a \not\in L(\mathcal{A})$. Same argument holds true for words ending with $`b$'.

\item[$\bullet$] Note that, $a^nb^na^nb^nc \in L(\mathcal{A})$ for $n \geq 0$ since, $q_1$ $\circlearrowleft$ $\cdots$ $\circlearrowleft$ $a^{n-1}b^{n-1}a^{n-1}b^{n-1}q_1$   $\circlearrowleft$ $a^{n-1}b^na^nb^{n-1}q_1$   $\circlearrowleft$ $a^nb^na^nb^nq_1$   $\circlearrowleft$ $a^nb^na^nb^ncq_0$.

\item[$\bullet$] If $w \in L(\mathcal{A})$, then $|w|_a=|w|_b$. This is because the deletion of $a$ or $b$ will happen using the rule $(q_1,ba,q_1)$ only.
\end{itemize}
\end{example}

Now, we prove that the language $L_c$ is not in the language class $\bf{GRCOWJ}$.

\begin{lemma}\label{LcGRC}
The language $~L_c~$ of Example \ref{Lc} cannot be accepted by any $\it{GRCOWJFA}$.
\end{lemma}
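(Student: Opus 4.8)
The plan is to assume, for contradiction, that some $\it{GRCOWJFA}$ $\mathcal{A}=(\Sigma,Q,q_0,F,R)$ satisfies $L(\mathcal{A})=L_c$, and to argue in the spirit of the proof of Lemma \ref{GRCD}. Put $l=|Q|$ and $m=\max\{|w|:(p,w,q)\in R\}$, and fix $n>(l+1)m$. By the second bullet of Example \ref{Lc} we have $a^nb^na^nb^nc\in L_c$, so there is an accepting computation $\gamma\colon q_0a^nb^na^nb^nc\circlearrowright^{*}q_f$ with $q_f\in F$. The first claim is that \emph{every configuration occurring in $\gamma$ ends with the letter $c$}. Indeed, if some configuration has the form $q'y\sigma$ with $\sigma\in\{a,b\}$, then, exactly as in Case~1 of the proof of Lemma \ref{GRCD}, repeated application of Lemma \ref{confi} to the initial segment $q_0a^nb^na^nb^nc\circlearrowright^{*}q'y\sigma$ (turning each jump-then-delete step into a front deletion) produces a word $\hat w$ having $y\sigma$ as a suffix with $q_0\hat w\circlearrowright^{*}q'y\sigma\circlearrowright^{*}q_f$; hence $\hat w\in L_c$ although $\hat w$ ends in $a$ or $b$, contradicting the first bullet of Example \ref{Lc}. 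Since the input has a single $c$, this also forces that $c$ is deleted only in the last step of $\gamma$ and that every configuration of $\gamma$ is of the form $q'\,a^{i_1}b^{j_1}a^{i_2}b^{j_2}c$ — the surviving letters of the four blocks, read cyclically starting just after $c$.

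The second step is to show that \emph{as long as $i_1>m$, every step of $\gamma$ merely deletes an $a$-prefix $a^{p}$ (with $1\le p\le m$) of the current configuration}, so that the configuration stays $q'a^{i_1}b^{n}a^{n}b^{n}c$ with only $i_1$ shrinking. At such a configuration the read head sits on a first-block $a$; since $i_1>m$ exceeds the length of every deletable word, the automaton cannot reach past the first block without first jumping over part of it, and after any deletion other than a pure first-block $a$-prefix the suffix-rotation brings either those jumped-over $a$'s or some following $b$'s to the right end of the configuration, so it no longer ends with $c$, contradicting the first step; a deletion of a first-block $a$ together with any letter to its right is excluded for the same reason. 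Because $\gamma$ must bring $i_1$ from $n$ down to at most $m$, and each step shortens it by at most $m$, this yields a chain of more than $l$ configurations $q_0a^nb^na^nb^nc,\;q_1a^{n_1}b^na^nb^nc,\;q_2a^{n_2}b^na^nb^nc,\dots$ with $n>n_1>n_2>\cdots>m$.

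By the pigeonhole principle two of these configurations carry the same state, say $q^{*}a^{r}b^na^nb^nc$ and $q^{*}a^{r'}b^na^nb^nc$ with $r>r'>m$; this gives a loop $q^{*}a^{r}b^na^nb^nc\circlearrowright^{+}q^{*}a^{r'}b^na^nb^nc$ that deletes exactly $r-r'\ge 1$ letters, all of them $a$'s. Every step of this loop is a front deletion of an $a$-prefix of length at most $m$, hence stays available when the first block is made longer; so on input $a^{\,n+(r-r')}b^na^nb^nc$ the automaton can replay these steps, traverse the loop once more, and then proceed exactly as $\gamma$ does from $q^{*}a^{r'}b^na^nb^nc$ onward, reaching $q_f$. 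Thus $a^{\,n+(r-r')}b^na^nb^nc\in L(\mathcal{A})=L_c$, yet this word contains $2n+(r-r')$ occurrences of $a$ and only $2n$ occurrences of $b$, contradicting the third bullet of Example \ref{Lc}. Hence $L_c\notin\mathbf{GRCOWJ}$.

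The step I expect to be the crux is the second one: proving that a long first block forces $\gamma$ to consume it strictly from the front. This is precisely what keeps the extracted loop ``pure'' (only $a$'s), so that pumping it breaks the letter-count invariant of $L_c$ rather than merely producing another word with equally many $a$'s and $b$'s; carrying it out amounts to a careful bookkeeping of how the suffix-rotation after each deletion moves the unique $c$ relative to the right end of the configuration, invoking the first step at every turn. The application of Lemma \ref{confi} in the first step is the same ``straightening'' manipulation already used in the proof of Lemma \ref{GRCD}.
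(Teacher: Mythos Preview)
Your argument is correct and rests on the same two ingredients as the paper's proof---Lemma~\ref{confi} (to manufacture an accepted word with a prescribed suffix) and a pigeonhole/pumping step on the leading block of $a$'s---but you organise them differently. The paper proceeds by a direct case analysis on what can be deleted from $q_0a^nb^na^nb^nc$: it rules out $a^ib^j$, $b^j$, $b^ja^i$, $b^jc$ and $c$ one by one via Lemma~\ref{confi}, then argues that a run of pure $a$-deletions must contain a loop, and finally repeats the case analysis at the first state where the run stops. You instead first extract the global invariant ``every configuration of $\gamma$ ends in $c$'', from which prefix-deletion follows and the whole case analysis collapses; the loop then drops out of a single pigeonhole count. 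Your route is tidier and makes explicit what the paper leaves implicit in the phrase ``$q_1$ cannot delete only $a$'s''. One small remark: in your second step you re-argue via suffix rotation what already follows from your first step---since every configuration ends in $c$ and the input contains a single $c$, the jumped-over part $u$ in any transition must be empty (otherwise $vu$ could not end in $c$), so every deletion is automatically a prefix deletion; invoking this directly would shorten that paragraph.
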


\begin{proof}
Let $L_c$ be accepted by a $\it{GRCOWJFA}$ $\mathcal{A}=(\Sigma,Q,q_0,F,R)$, then $L(\mathcal{A})=L_c$ and let $l=|Q|$, $m=\{|w|~:~ (p,w,q) \in R\}$ and $n > lm$. From Example \ref{Lc}, we have $a^nb^na^nb^nc \in L_c=L(\mathcal{A})$. Then, there would be a sequence of transitions such that $q_0a^nb^na^nb^nc \circlearrowright ^* q_f$, where $q_f \in F$. The automaton can delete $a^i$, $a^ib^j$, $b^j$, $b^ja^i$, $b^jc$ or $c$, where $i,j \geq 1$, from the initial configuration $q_0a^nb^na^nb^nc$.

\vspace{0.2cm}
\noindent If it deletes $a^ib^j$, $i,j \geq 1$, from the initial configuration, then it follows the following sequence of transitions: $q_0a^nb^na^nb^nc$ $\circlearrowright$ $q_1b^{n-j}a^nb^nca^{n-i}$ $\circlearrowright^*$ $q_f$, where $q_1 \in Q$. In this case, the automaton will also accept the word $a^ib^na^nb^nca^{n-i}$ because $q_0a^ib^na^nb^nca^{n-i}$ $\circlearrowright$ $q_1b^{n-j}a^nb^nca^{n-i}$ $\circlearrowright^*$ $q_f$. Hence, $a^ib^na^nb^nca^{n-i}$ $\in L(\mathcal{A})$ but $a^ib^na^nb^nca^{n-i}$ $\not\in L_c$. Hence, it cannot delete $a^ib^j$ from the initial configuration $q_0a^nb^na^nb^nc$. Similarly, it can be shown that it cannot delete $b^j$, $b^ja^i$, $b^jc$ or $c$, where $i,j \geq 1$, from the initial configuration.

\vspace{0.2cm}
\noindent The automaton cannot delete all $a$'s starting from the initial configuration $q_0a^nb^na^nb^nc$ without deleting any $`b$' or $`c$' because in this case the automaton will have to loop and the loop has only $a$'s. Then, that loop can be used multiple times to create a mismatch in the indices of $a$ and $b$. Hence, there would be a word $w \in L(\mathcal{A})$ such that $|w|_a \neq |w|_b$ but $w \not\in L_c$. Hence, it cannot delete all $a$'s starting from the initial configuration $q_0a^nb^na^nb^nc$ without deleting a $`b$' or $`c$'.

\vspace{0.2cm}
\noindent Suppose the automaton deletes  $a^t$ for $1 \leq t < n$ without using loop, then starting from the initial configuration $q_0a^nb^na^nb^nc$, i.e., $q_0a^nb^na^nb^nc$ $~\circlearrowright^*$ $q_1a^{n-t}b^na^nb^nc$ $\circlearrowright^*$ $q_f$, where $q_1 \in Q$. Here, $q_1$ cannot delete only $a$'s and, hence, it will delete $a^ib^j$, where $0 \leq i < (n-t), 1 \leq j \leq m$, or $b^ja^i$, where $1 \leq j < m, 1 \leq i < m $, or $b^jc$, where $0 \leq j < m$. Suppose it deletes $a^ib^j$, where $0 \leq i < (n-t), 1 \leq j \leq m$, then $q_0a^nb^na^nb^nc$ $\circlearrowright^*$ $q_1a^{n-t}b^na^nb^nc$ $\circlearrowright$ $q_2b^{n-j}a^nb^nca^{n-t-i}$ $\circlearrowright^*$ $q_f$, where $q_2 \in Q$. In this case, the automaton will also accept the word $a^{t+i}b^na^nb^nca^{n-t-i}$ because $q_0a^{t+i}b^na^nb^nca^{n-t-i}$ $\circlearrowright^*$ $q_1a^ib^na^nb^nca^{n-t-i}$ $\circlearrowright$ $q_2b^{n-j}a^nb^nca^{n-t-i}$ $\circlearrowright^*$ $q_f$. Hence, $a^{t+i}b^na^nb^nca^{n-t-i}$ $\in L(\mathcal{A})$ but $a^{t+i}b^na^nb^nca^{n-t-i}$ $\not\in L_c$. Hence, $q_1$ cannot delete $a^ib^j$. Similarly, it can be shown that $q_1$ cannot delete $b^ja^i$ or $b^jc$.

\vspace{0.2cm}
\noindent Hence, we conclude $L_c \not\in$ $\bf{GRCOWJ}$.
\end{proof}

From Example \ref{Lc} and Lemma \ref{LcGRC}, we have that $\bf{GLCOWJ}$ $\not\subseteq$ $\bf{GRCOWJ}$. Similarly, it can be shown that $\bf{GRCOWJ \not\subseteq GLCOWJ}$. Hence, we have the following result.
\begin{proposition}
The language classes $\bf{GRCOWJ}$ and $\bf{GLCOWJ}$ are incomparable.
\end{proposition}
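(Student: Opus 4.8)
The plan is to prove the two incomparabilities separately, each time exhibiting a witness language in one class and showing it cannot lie in the other. The direction $\mathbf{GLCOWJ} \not\subseteq \mathbf{GRCOWJ}$ is already in hand: Example~\ref{Lc} gives the language $L_c \in \mathbf{GLCOWJ}$, and Lemma~\ref{LcGRC} shows $L_c \notin \mathbf{GRCOWJ}$. So the only remaining work is the symmetric direction $\mathbf{GRCOWJ} \not\subseteq \mathbf{GLCOWJ}$.

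For that direction I would invoke the reversal Proposition proved just above: since $L_c \in \mathbf{GLCOWJ}$, its reversal $L_c^R$ lies in $\mathbf{GRCOWJ}$. I then claim $L_c^R \notin \mathbf{GLCOWJ}$. Suppose otherwise; by the same Proposition (the ``vice versa'' part), $(L_c^R)^R = L_c \in \mathbf{GRCOWJ}$, contradicting Lemma~\ref{LcGRC}. This gives $L_c^R$ as the required witness with essentially no new computation — the reversal machinery does all the lifting.

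Thus the skeleton of the proof is: (i) recall $L_c \in \mathbf{GLCOWJ}$ (Example~\ref{Lc}) and $L_c \notin \mathbf{GRCOWJ}$ (Lemma~\ref{LcGRC}), giving $\mathbf{GLCOWJ} \not\subseteq \mathbf{GRCOWJ}$; (ii) apply the reversal Proposition to get $L_c^R \in \mathbf{GRCOWJ}$; (iii) argue $L_c^R \notin \mathbf{GLCOWJ}$, for otherwise reversal would force $L_c \in \mathbf{GRCOWJ}$, contradiction; hence $\mathbf{GRCOWJ} \not\subseteq \mathbf{GLCOWJ}$; (iv) conclude incomparability. The alternative, which the paper seems to hint at with ``similarly, it can be shown'', would be to write down an explicit $\it{GRCOWJFA}$ for some language and redo the pumping-style argument of Lemma~\ref{LcGRC} by hand against $\it{GLCOWJFA}$'s; I would avoid this in favour of the reversal shortcut, since it reuses already-established results cleanly.

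The only place any care is needed is step (iii): one must make sure the reversal Proposition really is a biconditional at the level of classes, i.e.\ that $L \in \mathbf{GRCOWJ} \iff L^R \in \mathbf{GLCOWJ}$, so that assuming $L_c^R \in \mathbf{GLCOWJ}$ legitimately yields $L_c = (L_c^R)^R \in \mathbf{GRCOWJ}$. The Proposition as stated (``for a language $L_1 \in \mathbf{GRCOWJ}$ there exists $L_2 \in \mathbf{GLCOWJ}$ with $L_1^R = L_2$, and vice versa'') does give exactly this equivalence, so the obstacle is merely one of careful bookkeeping rather than a genuine mathematical difficulty; all the real content lives in Lemma~\ref{LcGRC}, which is already proved.
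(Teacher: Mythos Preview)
Your proposal is correct. The first direction ($\mathbf{GLCOWJ}\not\subseteq\mathbf{GRCOWJ}$) is handled exactly as in the paper, via Example~\ref{Lc} and Lemma~\ref{LcGRC}.

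For the reverse direction the paper does not give a proof: it simply writes ``similarly, it can be shown that $\mathbf{GRCOWJ}\not\subseteq\mathbf{GLCOWJ}$'', apparently intending a mirror-image of Example~\ref{Lc} (a $\it{GRCOWJFA}$ witness) together with a symmetric redo of the pumping argument in Lemma~\ref{LcGRC}. Your route is different and cleaner: you take $L_c^R$ as the witness, place it in $\mathbf{GRCOWJ}$ via the reversal Proposition, and exclude it from $\mathbf{GLCOWJ}$ by contraposition of that same Proposition together with Lemma~\ref{LcGRC}. This avoids repeating any case analysis and makes explicit that the two non-inclusions are formally equivalent under the reversal correspondence; the paper's ``similarly'' leaves that equivalence implicit and suggests more work than is actually needed. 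Your bookkeeping check in step~(iii) is sound: the Proposition's ``vice versa'' clause gives $L\in\mathbf{GLCOWJ}\Rightarrow L^R\in\mathbf{GRCOWJ}$, which is exactly what you use.
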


\section{$\bf{GCOWJFA}$ and $\bf{Chomsky~ hierarchy}$}\label{gcch}
This section compares the language class $\bf{GRCOWJ}$ with the language classes of Chomsky hierarchy. The class of regular languages, context free languages and context sensitive languages are denoted by $\bf{REG}$, $\bf{CF}$ and $\bf{CS}$ respectively. The relation $\bf{REG \subset ROWJ}$ was proved in \cite{owjfa} and in \cite{glowjfa}, the inclusion $\bf{ROWJ \subset GRLOWJ}$ was shown. Hence, $\bf{REG \subset ROWJ}$ $\subset \bf{GRLOWJ}$. The class $\bf{CF}$ is incomparable with both the language classes $\bf{ROWJ}$ \cite{owjfa} and $\bf{GRLOWJ}$ \cite{glowjfa}. It was shown in \cite{glowjfa} that $\bf{ROWJ \subset GRLOWJ}$ $\bf{\subset CS.}$ In this paper, we show that the language class $\bf{REG}$ is properly contained in the class $\bf{GRCOWJ}$, the language classes $\bf{CF}$ and $\bf{GRCOWJ}$ are incomparable and the class $\bf{GRCOWJ}$ is properly contained in the class $\bf{CS}$. Same results hold good for the class $\bf{GLCOWJ}$. 

We recall some results from \cite{owjfa}.
\begin{lemma}\label{regrowj}
$\bf{ROWJ}$ properly includes $\bf{REG}$, i.e., $\bf{REG \subset ROWJ}$.
\end{lemma}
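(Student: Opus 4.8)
The plan is to prove the two halves of $\bf{REG}$ $\subset$ $\bf{ROWJ}$ separately. For the inclusion $\bf{REG}$ $\subseteq$ $\bf{ROWJ}$ I would take a regular language $L$, fix a \emph{complete} deterministic finite automaton $M=(\Sigma,Q,\delta,q_0,F)$ for it (completeness costs nothing: add a sink state), and read its transition function as the rule set $R=\{(p,a,\delta(p,a)):p\in Q,\ a\in\Sigma\}$, obtaining a $\it{ROWJFA}$ $\mathcal{A}$. Because $M$ is complete, $\Sigma_p=\Sigma$ for every state $p$, so $(\Sigma\setminus\Sigma_p)^*=\{\lambda\}$ and every step $pxay\circlearrowright qyx$ is forced to have $x=\lambda$; that is, $\circlearrowright$ collapses to ordinary left-to-right reading. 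A routine induction on $|w|$ then gives $q_0w\circlearrowright^*q$ iff $\delta^*(q_0,w)=q$, hence $L_R(\mathcal{A})=L$.

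For strictness I would exhibit a single $\it{ROWJFA}$ with a non-regular language. Take $\mathcal{A}=(\{a,b\},\{q_0,q_1\},q_0,\{q_0\},R)$ with $R=\{(q_0,a,q_1),(q_1,b,q_0)\}$, so that $\Sigma_{q_0}=\{a\}$ and $\Sigma_{q_1}=\{b\}$, and argue that $L_R(\mathcal{A})=\{w\in\{a,b\}^*:|w|_a=|w|_b\}$. This language is not regular (apply the pumping lemma to $a^pb^p$), so together with the first part we obtain $\bf{REG}$ $\subset$ $\bf{ROWJ}$.

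The only real work is the identity for $L_R(\mathcal{A})$. Since a $\it{ROWJFA}$ is deterministic, each input has a unique maximal computation from $q_0w$, and it suffices to show this computation halts in the configuration $q_0$ (empty tape) precisely when $|w|_a=|w|_b$. I would set up a cycle invariant: from a configuration $q_0u$ with $u\neq\lambda$, if $|u|_a\geq1$ the automaton is forced to delete the leading $a$, move to $q_1$, skip the maximal block of $a$'s preceding the first remaining $b$ and delete that $b$, returning to $q_0$ with a word whose $a$-count and $b$-count have each decreased by one; if instead $|u|_a=0$ (so $u\in b^+$) it is stuck at $q_0$, and if at $q_1$ there is no remaining $b$ it is stuck at $q_1$. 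Iterating, the computation reaches $q_0$ with empty tape iff equal numbers of $a$'s and $b$'s were consumed, i.e.\ iff $|w|_a=|w|_b$. The delicate points --- that the ``skip the $a$'s, delete a $b$'' move is indeed the forced transition at $q_1$, and that no accepting run can halt in $q_0$ with a nonempty residual word --- are exactly where I would be careful; both are immediate from $\Sigma_{q_0}=\{a\}$, $\Sigma_{q_1}=\{b\}$ and the definition of $\circlearrowright$, but they are the natural place for a hasty argument to go wrong.
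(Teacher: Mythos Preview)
Your argument is correct. The inclusion half is exactly the standard trick: for a complete DFA every $\Sigma_p$ equals $\Sigma$, so the jump relation degenerates to ordinary left-to-right reading and $L_R(\mathcal{A})=L(M)$. For strictness your two-state machine with rules $(q_0,a,q_1)$ and $(q_1,b,q_0)$ does accept $\{w\in\{a,b\}^*:|w|_a=|w|_b\}$; the invariant ``each $q_0\to q_1\to q_0$ cycle removes exactly one $a$ and one $b$'' is sound, and the halting analysis (stuck at $q_0$ on a word in $b^+$, stuck at $q_1$ on a word in $a^*$) covers all cases. One phrasing nit: ``delete the leading $a$'' should be read as ``delete the first $a$ in the current tape word after skipping the prefix of $b$'s'', and after that deletion the tape is \emph{rotated}, so the $b$ deleted at $q_1$ need not be the next $b$ of the original input; your count-based invariant is unaffected, but the prose could make the rotation explicit.

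As for comparison with the paper: there is nothing to compare. The paper does not prove this lemma at all; it merely \emph{recalls} it from \cite{owjfa} (Chigahara, Fazekas, Yamamura), where the same complete-DFA argument and the same witness language $\{w:|w|_a=|w|_b\}$ are used. So your proposal matches the original source that the present paper cites.
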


\begin{lemma}\label{cfrowj}
$\bf{CF}$ and $\bf{ROWJ}$ are incomparable.
\end{lemma}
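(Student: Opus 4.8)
The lemma is recalled from \cite{owjfa}, and incomparability is proved the usual way: by exhibiting one language in $\bf{CF}$ that is not in $\bf{ROWJ}$, and one language in $\bf{ROWJ}$ that is not in $\bf{CF}$. The plan is to produce both witnesses and the relevant arguments.

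For $\bf{ROWJ} \not\subseteq \bf{CF}$, I would take $L_2 = \{w \in \{a,b,c\}^* : |w|_a = |w|_b = |w|_c\}$. First I would build a three-state $\it{ROWJFA}$ $\mathcal{A}$ with $Q = \{q_0,q_1,q_2\}$, start state $q_0$, $F = \{q_0\}$, and rules $(q_0,a,q_1),(q_1,b,q_2),(q_2,c,q_0)$, so that $\Sigma_{q_0} = \{a\}$, $\Sigma_{q_1} = \{b\}$, $\Sigma_{q_2} = \{c\}$. For correctness, an accepting run must start and end in $q_0$ after deleting the whole input, and since the only transitions cycle $q_0 \to q_1 \to q_2 \to q_0$, the numbers of symbols deleted at $q_0$, $q_1$, $q_2$ coincide; these are exactly $|w|_a$, $|w|_b$, $|w|_c$, so $L(\mathcal{A}) \subseteq L_2$. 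Conversely, if the three counts are equal, then whenever the run is at a state $q_i$ with a nonempty remaining word the sought symbol still occurs in it (the counts stay balanced up to the current phase of the cycle), so the run never blocks prematurely and terminates in $q_0$; hence $L_2 \subseteq L(\mathcal{A})$. Finally $L_2 \notin \bf{CF}$, since $L_2 \cap a^*b^*c^* = \{a^n b^n c^n : n \ge 0\}$ is not context-free while $\bf{CF}$ is closed under intersection with regular languages.

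For $\bf{CF} \not\subseteq \bf{ROWJ}$, I would take $L_1 = \{a^n b^n : n \ge 0\}$, which is clearly context-free, and show no $\it{ROWJFA}$ accepts it. Suppose some $\it{ROWJFA}$ $\mathcal{A} = (\Sigma,Q,q_0,F,R)$ accepts $L_1$ and put $l = |Q|$. Run $\mathcal{A}$ on $a^N b^N$ with $N$ much larger than $l$ and follow the configuration sequence $q_0 a^N b^N \circlearrowright^* q_f \in F$. Because the head moves only forward and jumps over maximal blocks of symbols outside the current $\Sigma_p$, every configuration in this run has underlying word a cyclic factor of a word in $a^*b^* \cup b^*a^*$. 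When $N$ is large enough, by pigeonhole the run revisits some state $p$ at two configurations $p\,u_1 \circlearrowright^+ p\,u_2$ where $u_2$ is obtained from $u_1$ by removing a nonempty block of letters lying entirely within one monochromatic region; this segment is a pumpable loop. Iterating or removing it produces an accepting computation of a word with unequal numbers of $a$'s and $b$'s (or in which some $a$ follows some $b$), contradicting $L(\mathcal{A}) = L_1$.

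The routine parts are the construction and verification in the second paragraph and the closure-under-intersection argument for $\bf{CF}$. The genuine obstacle is the third paragraph: $\bf{ROWJ}$ admits no off-the-shelf pumping lemma, so one must reason directly about the jumping dynamics and, in particular, track precisely which monochromatic block each deleted symbol comes from, in order to locate a same-state, same-shape, but strictly shorter configuration certifying a pumpable loop. This bookkeeping is the delicate step; an alternative witness such as the palindrome language $\{w \in \{a,b\}^* : w = w^R\}$ could be used in place of $L_1$ with a similar argument.
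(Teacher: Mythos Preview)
The paper does not prove this lemma; it merely recalls it from \cite{owjfa}, so there is no in-paper argument to compare against. Your proposal supplies an actual proof sketch, and your two witnesses, namely $\{w\in\{a,b,c\}^*:|w|_a=|w|_b=|w|_c\}$ for $\bf{ROWJ}\setminus\bf{CF}$ and $\{a^nb^n:n\ge 0\}$ for $\bf{CF}\setminus\bf{ROWJ}$, are the ones used in the cited source, so in spirit you are reconstructing that proof rather than proposing an alternative.

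Your treatment of $\bf{ROWJ}\not\subseteq\bf{CF}$ is correct and complete. For $\bf{CF}\not\subseteq\bf{ROWJ}$ there is a genuine gap that you only partly flag: the claim that some state repetition can be found with the intervening deletions lying ``entirely within one monochromatic region'' is not justified. Pigeonhole readily yields $p\,a^{i_1}b^{j_1}\circlearrowright^{+} p\,a^{i_2}b^{j_2}$ (or the $b^{*}a^{*}$ analogue), and as long as all intermediate counts stay positive this segment is indeed pumpable; but if the loop removes $\delta_a$ copies of $a$ and $\delta_b$ copies of $b$ with $\delta_a=\delta_b\ge 1$, then pumping preserves both the equality $|w|_a=|w|_b$ and the $a^{*}b^{*}$ shape, and no contradiction follows. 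The case $\delta_a=\delta_b$ must be handled separately (for instance by feeding $a^{N+1}b^{N}$ through the same periodic prefix and analysing the short tail), and this extra case split is precisely the ``delicate bookkeeping'' your sketch defers. Until it is supplied, the second direction is not established.
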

From Lemma \ref{regrowj} and Note \ref{owjgcowj}, we have the following result.
\begin{proposition}\label{reggrc}
$\bf{REG \subset GRCOWJ}$.
\end{proposition}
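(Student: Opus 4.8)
The plan is to derive Proposition \ref{reggrc} as an immediate corollary of two facts already available in the excerpt: the proper inclusion $\bf{REG \subset ROWJ}$ (Lemma \ref{regrowj}) and the inclusion $\bf{ROWJ \subseteq GRCOWJ}$, which is the first half of Note \ref{owjgcowj}. Chaining these gives $\bf{REG \subset ROWJ \subseteq GRCOWJ}$, hence $\bf{REG \subseteq GRCOWJ}$; to upgrade this to a proper inclusion it suffices to exhibit a single non-regular language lying in $\bf{GRCOWJ}$. For that I would point to Example \ref{nonrowj}, whose language $L_{GRC}(\mathcal{A})=\{w \in \{a,b\}^* : |w|_a=|w|_b \text{ or } |w|_b=0\}$ is in $\bf{GRCOWJ}$ by construction and is not regular (a standard pumping-lemma argument on $a^n b^n$ shows $\{w : |w|_a=|w|_b\}$ is non-regular, and intersecting the displayed language with the regular set $a^*b^*$ yields exactly $\{a^nb^n : n\geq 0\}$, so the displayed language itself is non-regular). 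This witnesses $\bf{REG \neq GRCOWJ}$, completing the proof.

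Concretely the proof would read: by Lemma \ref{regrowj}, $\bf{REG \subset ROWJ}$; since every $\it{ROWJFA}$ is a $\it{GRCOWJFA}$ all of whose rules $(p,w,q)$ satisfy $|w|=1$ (as observed before Note \ref{owjgcowj}), we have $\bf{ROWJ \subseteq GRCOWJ}$; combining, $\bf{REG \subseteq GRCOWJ}$. For properness, the language of Example \ref{nonrowj} belongs to $\bf{GRCOWJ}$ but is not regular, so the inclusion is strict, giving $\bf{REG \subset GRCOWJ}$.

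There is essentially no obstacle here — the statement is a bookkeeping consequence of results that precede it, and the only point requiring a line of genuine argument is the non-regularity of the witness language, which is routine. I would simply cite Note \ref{owjgcowj} for the inclusion and invoke the non-regularity of $\{w \in \{a,b\}^* : |w|_a = |w|_b\}$ (closure of $\bf{REG}$ under intersection with $a^*b^*$ reduces it to the textbook example $\{a^nb^n\}$) to conclude strictness, keeping the whole proof to a few sentences.

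\begin{proof}
By Lemma \ref{regrowj}, $\bf{REG} \subset \bf{ROWJ}$, and by Note \ref{owjgcowj}, $\bf{ROWJ} \subseteq \bf{GRCOWJ}$. Hence $\bf{REG} \subseteq \bf{GRCOWJ}$. For strictness, consider the language $L = L_{GRC}(\mathcal{A})$ of Example \ref{nonrowj}, namely $L = \{w \in \{a,b\}^* : |w|_a = |w|_b \text{ or } |w|_b = 0\}$, which lies in $\bf{GRCOWJ}$ by construction. Since $\bf{REG}$ is closed under intersection and $a^*b^*$ is regular, if $L$ were regular then $L \cap a^*b^* = \{a^n b^n : n \geq 0\}$ would be regular, contradicting the pumping lemma. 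Therefore $L \notin \bf{REG}$, and the inclusion is proper, i.e., $\bf{REG} \subset \bf{GRCOWJ}$.
\end{proof}
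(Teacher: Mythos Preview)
Your approach is essentially the paper's: the proposition is stated as an immediate consequence of Lemma \ref{regrowj} and Note \ref{owjgcowj}, which already give the proper chain $\bf{REG \subset ROWJ \subset GRCOWJ}$, so your separate strictness witness via Example \ref{nonrowj} is superfluous. One minor slip in that redundant part: $L \cap a^*b^*$ is not $\{a^nb^n : n\geq 0\}$ but $\{a^nb^n : n\geq 0\}\cup a^*$, since the $|w|_b=0$ disjunct contributes all of $a^*$; this set is still non-regular, so your conclusion is unaffected.
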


\begin{proposition}\label{cfgrl}
The language classes $\bf{CF}$ and $\bf{GRCOWJ}$ are incomparable.
\end{proposition}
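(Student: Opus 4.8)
To show $\mathbf{CF}$ and $\mathbf{GRCOWJ}$ are incomparable, I must exhibit one language in each class outside the other. The plan is to reuse, as much as possible, the witnesses already available in the excerpt. For the direction $\mathbf{GRCOWJ} \not\subseteq \mathbf{CF}$, the natural candidate is a language in $\mathbf{GRCOWJ}$ that is not context-free. The language $L = \{w \in \{a,b\}^* : |w|_a = |w|_b \text{ or } |w|_b = 0\}$ from Example~\ref{nonrowj} is in $\mathbf{GRCOWJ}$, but it \emph{is} context-free, so it does not serve. Instead I would build a small $\mathit{GRCOWJFA}$ over $\{a,b,c\}$ accepting (a variant of) $\{a^nb^nc^n : n \ge 0\}$ or a similarly non-context-free language, exploiting the circular-tape mechanism exactly as in Example~\ref{Lc}: a loop that deletes a matched block $ba$ (or $cb$) repeatedly while rotating the tape lets one count two pairs of symbols against each other, and chaining two such loops through distinct states gives a triple count. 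I would then verify non-context-freeness by the pumping lemma for context-free languages, and verify membership in $\mathbf{GRCOWJ}$ by tracing the accepting computation on $a^nb^nc^n$ and arguing (as in the proof of Lemma~\ref{LcGRC}, using Lemma~\ref{confi}) that no spurious words are accepted.

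For the other direction, $\mathbf{CF} \not\subseteq \mathbf{GRCOWJ}$, I would invoke Lemma~\ref{cfrowj}: since $\mathbf{CF}$ and $\mathbf{ROWJ}$ are incomparable, there is a context-free language $L_0 \notin \mathbf{ROWJ}$. However this alone is not enough, because $\mathbf{ROWJ} \subsetneq \mathbf{GRCOWJ}$ (Note~\ref{owjgcowj}), so $L_0$ might still lie in $\mathbf{GRCOWJ}$. The cleaner route is to take the Dyck language $D$: by Example~\ref{GRLD} it is context-free (indeed it is the paradigmatic context-free language), and by Lemma~\ref{GRCD} we have $D \notin \mathbf{GRCOWJ}$. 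This immediately gives $\mathbf{CF} \not\subseteq \mathbf{GRCOWJ}$ with a witness already proved in the excerpt, so that half is essentially free.

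Combining the two directions yields that neither class contains the other, i.e.\ $\mathbf{CF}$ and $\mathbf{GRCOWJ}$ are incomparable; the statement for $\mathbf{GLCOWJ}$ follows by the reversal correspondence established earlier (for the $\mathbf{CF}$ side, since $\mathbf{CF}$ is closed under reversal, and for the $\mathbf{GRCOWJ}$/$\mathbf{GLCOWJ}$ side by the proposition relating $L_1 \in \mathbf{GRCOWJ}$ to $L_1^R \in \mathbf{GLCOWJ}$).

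The main obstacle is the direction $\mathbf{GRCOWJ} \not\subseteq \mathbf{CF}$: I must both design a correct $\mathit{GRCOWJFA}$ for a non-context-free target language and, crucially, prove that its circular-jumping semantics do not accidentally accept extra words (the rotation of the tape after each deletion makes the reachable-configuration analysis delicate, exactly the kind of case-by-case argument carried out in Lemma~\ref{LcGRC}). I expect the construction itself to be short — a three- or four-state automaton with one or two loops — but the soundness argument (no spurious acceptances) will require the careful use of Lemma~\ref{confi} and a case analysis on which block the automaton deletes first from the initial configuration.
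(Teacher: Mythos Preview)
Your direction $\mathbf{CF} \not\subseteq \mathbf{GRCOWJ}$ via the Dyck language and Lemma~\ref{GRCD} matches the paper exactly. For the other direction, however, you overlook a one-line argument that the paper uses: Lemma~\ref{cfrowj} says $\mathbf{CF}$ and $\mathbf{ROWJ}$ are incomparable, so in particular $\mathbf{ROWJ} \setminus \mathbf{CF} \neq \emptyset$; combined with $\mathbf{ROWJ} \subset \mathbf{GRCOWJ}$ (Note~\ref{owjgcowj}) this immediately gives $\mathbf{GRCOWJ} \not\subseteq \mathbf{CF}$. You cited Lemma~\ref{cfrowj} but extracted only the half $\mathbf{CF} \setminus \mathbf{ROWJ} \neq \emptyset$, which (as you correctly observed) is useless here because of the strict containment; the \emph{other} half of the incomparability is the one you need, and it finishes the proof with no new construction at all.

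Your proposed alternative---building a $\mathit{GRCOWJFA}$ for $\{a^nb^nc^n : n \ge 0\}$ and verifying it by a Lemma~\ref{LcGRC}-style case analysis---is not merely longer but genuinely at risk of failing. The paper proves (Lemma~\ref{a^nb^n}) that $\{a^nb^n : n \ge 0\} \notin \mathbf{GRCOWJ}$, and the mechanism behind that result and Lemma~\ref{GRCD} (via Lemma~\ref{confi}: any intermediate configuration whose tape ends in the ``wrong'' letter yields an accepted word ending in that letter) threatens $\{a^nb^nc^n\}$ in the same way. You have not supplied the automaton, and the natural candidates are likely to over-accept under circular rotation; the non-context-free languages that do sit comfortably in $\mathbf{GRCOWJ}$ are permutation-invariant counting languages such as $\{w \in \{a,b,c\}^* : |w|_a = |w|_b = |w|_c\}$, and those are already known to lie in $\mathbf{ROWJ}$---which brings you straight back to the paper's shortcut.
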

\begin{proof}
The Dyck language $D$ $\in$ $\bf{CF}$ but from Lemma \ref{GRCD} we have, $D \not\in$ $\bf{GRCOWJ}$. Hence, $\bf{CF}$ $\not\subseteq$ $\bf{GRCOWJ}$. From Lemma \ref{cfrowj}, we conclude $\bf{ROWJ \not\subseteq CF}$ and hence, from Note \ref{owjgcowj}, we have $\bf{GRCOWJ}$ $\not\subseteq$ $\bf{CF}$.
\end{proof}

\begin{proposition}\label{csgrl}
The language class $\bf{GRCOWJ}$ is a proper subset of the class $\bf{CS}$, i.e., $\bf{GRCOWJ \subset CS}$.
\end{proposition}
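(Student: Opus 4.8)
The plan is to prove the two inclusions separately: $\mathbf{GRCOWJ} \subseteq \mathbf{CS}$ and $\mathbf{GRCOWJ} \neq \mathbf{CS}$. For the containment, I would construct, from an arbitrary $\mathit{GRCOWJFA}$ $\mathcal{A}=(\Sigma,Q,q_0,F,R)$, a linear bounded automaton (or equivalently a length-nonincreasing/monotone grammar) that simulates $\mathcal{A}$ on an input $w$. The key observation is that throughout any computation the configuration $q\,v$ satisfies $|v|\le |w|$, since every rule deletes a nonempty word and no rule ever lengthens the tape content; the circular permutation that accompanies a deletion does not change the length. So an LBA can keep the current state in its finite control and the current tape content $v$ within the $|w|$ cells it already has. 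At each step it must locate the nearest available subword: scan $v$ from the left, and for the leftmost position at which some $x\in\Sigma_p$ occurs as a factor with the ``no earlier/overlapping occurrence'' condition of Definition \ref{grcd}, delete that occurrence and rewrite the tape as the circular permutation $vu$ (which is again a rearrangement within the same cells, implementable by an LBA using marks and a bounded number of sweeps). Acceptance is when the tape is empty and the state is in $F$. Since nondeterministic LBAs recognize exactly $\mathbf{CS}$, this gives $\mathbf{GRCOWJ}\subseteq\mathbf{CS}$; strictly speaking $\mathcal{A}$ is deterministic, but determinism only makes the simulation easier, so the inclusion holds a fortiori.

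For properness I would exhibit a context-sensitive language not in $\mathbf{GRCOWJ}$. The cleanest choice already available in the excerpt is the Dyck language $D$: by Lemma \ref{GRCD}, $D\notin\mathbf{GRCOWJ}$, and $D\in\mathbf{CF}\subseteq\mathbf{CS}$. Hence $\mathbf{GRCOWJ}\subsetneq\mathbf{CS}$. This makes the proposition essentially a corollary of Lemma \ref{GRCD} together with the LBA simulation, and no fresh separating example is needed.

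The main obstacle is making the LBA simulation of the ``nearest available subword'' rule genuinely space-bounded and correct, in particular handling the circular rewriting $uxv\mapsto vu$ without ever needing more than $|w|$ tape cells and without miscounting overlapping factor occurrences. I would address this by encoding the tape content together with a movable ``cut point'' marker rather than physically rotating the string: a configuration $q\,v$ with read head at the first symbol is represented by storing $v$ between two endmarkers plus a pointer indicating where the ``front'' currently is, so a circular permutation is just an update of the pointer. Deletion of a matched factor then shrinks the used region, and the search for the leftmost valid occurrence of a rule word from $\Sigma_p$ is a bounded number of left-to-right passes comparing against each of the finitely many right-hand sides in $R$; the overlap condition $u_2x_1\neq x$ is a local check performed during the scan. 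All of this is routine but must be spelled out carefully to confirm the space bound; once it is, the proposition follows. I would close by remarking that the analogous statement $\mathbf{GLCOWJ}\subset\mathbf{CS}$ holds by the reversal correspondence of the previous section, since $\mathbf{CS}$ is closed under reversal.
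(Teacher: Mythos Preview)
Your proposal is correct and follows essentially the same approach as the paper: an LBA simulation of the $\mathit{GRCOWJFA}$ (the paper physically cuts-and-pastes to realize the circular permutation, whereas you suggest a pointer/marker encoding, but this is only an implementation detail) together with the Dyck language $D$ from Lemma~\ref{GRCD} as the separating witness. Your write-up is in fact more careful about the space bound and the overlap condition than the paper's sketch, and your closing remark about $\mathbf{GLCOWJ}$ via reversal matches the paper's ``similarly'' comment.
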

\begin{proof}
Here, we give an idea on how the rules of a $\it{GRCOWJFA}$ are simulated by a linear bounded automaton $(\it{LBA})$.
\begin{enumerate}
    \item[1.] For a word $w=w_1w_2 \cdots w_n$ from the input alphabet, the tape of the $LBA$ contains $w_1w_2 \cdots w_n$, the tape head is at $w_1$ and the $LBA$ is in the starting state $q_0$, the starting state of $\it{GRCOWJFA}$. We assume, the tape is bounded from the left.

    \item[2.] Suppose the $LBA$ reaches in a state $p$ during its computation. Then, the $LBA$ searches for the nearest subword, from $\Sigma_p$, of the word available on the tape on the right side of the tape head.
    
    \item[3.] Suppose there is a nearest subword $x=x_1x_2 \cdots x_i$ and the corresponding rule of $\it{GRCOWJFA}$ is $(p,x, q)$. Then, the $LBA$ marks it as $** \cdots *$ ($i$- times). We assume $*$ is not a member of the alphabet set. By nearest subword what we mean that suppose there is a word $uxv$ from the input alphabet on the tape on the right side of the tape head, then $u$ does not contain any word from $\Sigma_p$ as a subword and $u'x' \neq x$, where $u'$ is a nonempty suffix of $u$ and $x'$ is a nonempty prefix of $x$.
    
    \item[4.] If there is a word from the alphabet left side of $*$'s, then the $\it{LBA}$ cuts that word and pastes it right side of the word available right side of $*$'s, i.e., suppose after marking $x_1x_2 \cdots x_i$ as $** \cdots *$, the tape contains $u ** \cdots * v$. Then, the $LBA$ cuts $u$ and pastes after $v$ and it goes to the state $q$ and the tape head is at the first symbol of $vu$.
    
    \item[5.] Suppose there are no symbols from the alphabet set present on the tape and the $LBA$ is in a state which is a final state of the $\it{GRCOWJFA}$, then the $LBA$ accepts the input word.
    \end{enumerate}    

Hence, we conclude that $\bf{GRCOWJ \subseteq CS}$. But from Lemma \ref{GRCD}, we have $D \not\in $ $\bf{GRCOWJ}$. Hence, $\bf{GRCOWJ \subset CS}$. Similarly, it can be shown that $\bf{GLCOWJ \subset CS}$.
   \end{proof}

\section{Closure Properties}\label{closure}
In this section we discuss closure properties of the language ~classes $\bf{GRCOWJ}$ and $\bf{GLCOWJ}$ under set theoretic as well as language operations. The classes $\bf{ROWJ}$ \cite{owjfa} and $\bf{GRLOWJ}$ \cite{glowjfa} are not closed under intersection, concatenation and reversal. Moreover, $\bf{ROWJ}$ is not closed under the operation Kleene star \cite{owjfa}, union, complement and homomorphism \cite{porowjfa}. In the same line, we show that the class $\bf{GRCOWJ}$ is not closed under intersection, concatenation, reversal and Kleene star. The same results hold true for the class $\bf{GLCOWJ}$. We begin the section with an Example of a $\it{GRCOWJFA}$. The language of the $\it{GRCOWJFA}$ will be used to prove non-closure of the class $\bf{GRCOWJ}$ under intersection, concatenation and Kleene star.

\begin{example}\label{ab}
Consider the $\it{GRCOWJFA}$ $\mathcal{A}=(\{a,b\},\{q_0,q_1\},q_0,\{q_0,q_1\},R)$, where $R=\{(q_0,ab,q_1),(q_1,ba,q_0)\}.$ Words $a^nb^n \in L(\mathcal{A})$ for all $n \geq 0$ because $q_0a^nb^n \circlearrowright q_1b^{n-1}a^{n-1} \circlearrowright q_0a^{n-2}b^{n-2} \circlearrowright \cdots \circlearrowright q_1/q_0$. If $n$ is even, then the automaton will be in the state $q_0$ and if $n$ is odd, then the automaton will be in the state $q_1$. Let $L(\mathcal{A})=L_{ab}$.
\end{example}
First we prove that the class $\bf{GRCOWJ}$ is not closed under Kleene star. We show that there exists a language $L$ in the class $\bf{GRCOWJ}$ but Kleene star $L^*$ of $L$ is not in the class $\bf{GRCOWJ}$.
\begin{proposition}
The language class $\bf{GRCOWJ}$ is not closed under Kleene star.
\end{proposition}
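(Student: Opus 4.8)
The plan is to take $L=L_{ab}$, which lies in $\mathbf{GRCOWJ}$ by Example \ref{ab}, and show $L_{ab}^{*}\notin\mathbf{GRCOWJ}$. I would first record two facts about $L_{ab}^{*}$. Since $a^{n}b^{n}\in L_{ab}$ for every $n\ge 0$, we have $\{a^{n}b^{n}:n\ge 0\}\subseteq L_{ab}\subseteq L_{ab}^{*}$. Moreover, every rule of the automaton of Example \ref{ab} deletes exactly one $a$ and one $b$, so every word it accepts is \emph{balanced} ($|w|_{a}=|w|_{b}$); hence every word of $L_{ab}$, and therefore every word of $L_{ab}^{*}$, is balanced, and in particular $a^{p}b^{q}\notin L_{ab}^{*}$ whenever $p\neq q$.

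Now suppose, for a contradiction, that $L_{ab}^{*}=L(\mathcal{A})$ for some $\it{GRCOWJFA}$ $\mathcal{A}=(\Sigma,Q,q_{0},F,R)$; put $l=|Q|$, $m=\max\{|w|:(p,w,q)\in R\}$ and fix $n>lm$. Since $a^{n}b^{n}\in L_{ab}^{*}$ there is an accepting computation $q_{0}a^{n}b^{n}\circlearrowright^{*}q_{f}$ with $q_{f}\in F$. The crucial structural remark is that every configuration occurring in it has the form $q'w'$ with $w'$ a cyclic rotation of $a^{i}b^{j}$, $i,j\le n$: indeed $a^{n}b^{n}$ is such a word, and removing a contiguous factor from a rotation of $a^{i}b^{j}$ and rotating again always yields a rotation of $a^{i'}b^{j'}$. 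From here I would imitate the two-case argument of the proof of Lemma \ref{GRCD}. In the ``good'' case no configuration's word ends with $a$; then, as in Lemma \ref{GRCD}, the computation must consume the entire $a$-block before touching the $b$'s, and since $n>lm$ two of the intermediate configurations $q'a^{n-P}b^{n}$ and $q'a^{n-P'}b^{n}$ (with $P<P'$) share the state $q'$. The segment between them deletes $a^{P'-P}$ off the front; inserting $P'-P$ further $a$'s at the start of the input and running this segment one extra time yields a valid accepting computation on $a^{n+(P'-P)}b^{n}$ — the inserted letters only lengthen a prefix $a$-block and the deletions involved are of prefix-type factors, for which the ``nearest subword'' side conditions of Definition \ref{grcd} are vacuous. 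Hence $a^{n+(P'-P)}b^{n}\in L(\mathcal{A})=L_{ab}^{*}$ with $P'-P\ge 1$, contradicting balancedness. (Lemma \ref{confi} is used here exactly as in Lemma \ref{GRCD}, to move the jumped-over prefixes to the front.)

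The step I expect to be the main obstacle is excluding configurations whose word ends with $a$. In Lemma \ref{GRCD} this is immediate because no word of the Dyck language ends with $a$; but $L_{ab}^{*}$ does contain such words (for instance $abba\in L_{ab}$), so invoking Lemma \ref{confi} at a configuration $q'w_{1}a$ only produces \emph{some} word of $L_{ab}^{*}$ ending with $a$, which is not yet a contradiction. I would handle this by taking the \emph{first} configuration $C_{k}=q'w_{1}a$ ending with $a$ (its predecessor's word ends with $b$), using the rotation structure to pin down the shape of the single deletion $C_{k-1}\circlearrowright C_{k}$ and of $w_{1}$, and then combining Lemma \ref{confi} with a further pumping step on the still-present block to force, once more, an unbalanced word in $L(\mathcal{A})$; the surrounding reasoning is a routine but lengthy case analysis over the position of each deleted factor within the current rotation. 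Finally, since $(L^{*})^{R}=(L^{R})^{*}$ and, by the proposition relating $\mathbf{GRCOWJ}$ and $\mathbf{GLCOWJ}$ via reversal, $L^{R}\in\mathbf{GLCOWJ}$ whenever $L\in\mathbf{GRCOWJ}$ and conversely, the same argument shows that $\mathbf{GLCOWJ}$ is not closed under Kleene star either.
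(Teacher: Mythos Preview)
Your approach diverges from the paper's at the very first step: the paper does \emph{not} use $L_{ab}$ as the witness language but rather $cL_{ab}$, with a fresh marker symbol $c$, and shows that $(cL_{ab})^{*}\notin\mathbf{GRCOWJ}$ by analysing the word $ca^{n}b^{n}c$. The marker is the whole point. Every nonempty word of $(cL_{ab})^{*}$ begins with $c$, so whenever Lemma~\ref{confi} produces, from an intermediate configuration, an accepted word that does not begin with $c$ (for instance one beginning with $a$ or $b$, or one of the shape $ca^{i}b^{n}ca^{n-i}$), a contradiction is immediate. This is exactly the device that replaces the ``no Dyck word ends with $a$'' step from Lemma~\ref{GRCD}.

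Your proposal drops the marker and tries to work with $L_{ab}^{*}$ directly, and you correctly identify the resulting obstacle: $L_{ab}^{*}$ \emph{does} contain words ending with $a$ (e.g.\ $abba$), so reaching a configuration $q'w_{1}a$ and invoking Lemma~\ref{confi} no longer yields a contradiction by itself. Your plan for this case---take the first such configuration, analyse the preceding step, and then ``combine Lemma~\ref{confi} with a further pumping step''---is not a proof; it is a hope. The difficulty is genuine: the configurations arising from $a^{n}b^{n}$ are arbitrary rotations $a^{p}b^{j}a^{q}$ or $b^{p}a^{i}b^{q}$, and once such rotations appear the ``delete all $a$'s first'' structure you rely on in the good case is lost, so it is not clear what quantity you would pump or why the resulting word would be unbalanced yet still lie in $L(\mathcal{A})$. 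Without this case completed you have not shown $L_{ab}^{*}\notin\mathbf{GRCOWJ}$, and indeed it is not evident from your outline that $L_{ab}^{*}$ is outside $\mathbf{GRCOWJ}$ at all. The paper's insertion of $c$ is precisely the missing idea: it costs nothing (one easily builds a $\it{GRCOWJFA}$ for $cL_{ab}$ from the one for $L_{ab}$) and it restores the rigid prefix structure that makes every branch of the case analysis terminate in a clean contradiction.
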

\begin{proof}
It can be shown, by adding some new states and rules in Example \ref{ab}, that the language $cL_{ab} \in$ $\bf{GRCOWJ}$. Let $L=(cL_{ab})^*$. We show that $L \not\in \bf{GRCOWJ}$. If $L \in \bf{GRCOWJ}$, then there exists a $\it{GRCOWJFA}$ $\mathcal{A}=(\Sigma,Q,q_0,F,R)$ such that $L(\mathcal{A})=L$. Let $l=|Q|$, $m=max\{|w|~:~ (p,w,q) \in R\}$ and $n > lm.$ Clearly, $ca^nb^nc \in L=L(\mathcal{A}).$ Hence, there exists a sequence of transitions such that $q_0ca^nb^nc \circlearrowright^* q_f$, where $q_f \in F$. The automaton can delete $c$, $ca^i$, $a^i$, $a^ib^j$, $b^j$ or $b^jc$, where $i,j \geq 1$, from the initial configuration $q_0ca^nb^nc$.

\begin{itemize}
    \item[Case 1.] If it deletes $a^i$, then the automaton makes the following sequence of transitions $q_0ca^nb^nc$ $\circlearrowright$ $q_1a^{n-i}b^ncc$ $\circlearrowright^*$ $q_f$, where $q_1 \in Q$. Then, $a^nb^ncc \in L(\mathcal{A})$ because of Lemma \ref{confi}. But $a^nb^ncc \not\in L$. Hence, the automaton cannot delete $a^i$ from the initial configuration. Similarly, it can be shown that it cannot delete $a^ib^j$, $b^j$ or $b^jc$ from the initial configuration $q_0ca^nb^nc$.

    \item[Case 2.] Hence, the automaton will have to delete $`c$' or $ca^i$ from the initial configuration $q_0ca^nb^nc$. We deal with the case $`c$' only, the case of $ca^i$ can be dealt similarly. If it deletes $c$ from the initial configuration $q_0ca^nb^nc$. Then, $q_0ca^nb^nc$ $\circlearrowright$ $q_1a^nb^nc$ $\circlearrowright^*$ $q_f$, where $q_1 \in Q$. It can delete $a^{i_1}$, $a^{i_1}b^{j_1}$, $b^{j_1}$, $b^{j_1}c$ or $c$, where $i_1,j_1 \geq 1$, from the configuration $q_1a^nb^nc$.
    
    \item[Sub-Case 2.1.]  If it deletes $a^{i_1}b^{j_1}$ from the configuration $q_1a^nb^nc$, then $q_0ca^nb^nc$ $\circlearrowright$ $q_1a^nb^nc$ $\circlearrowright$ $q_2b^{n-j_1}ca^{n-i_1}$ $\circlearrowright^*$ $q_f$, where $q_2 \in Q$. Then, $ca^{i_1}b^nca^{n-i_1} \in L(\mathcal{A})$ because of Lemma \ref{confi}. But $ca^{i_1}b^nca^{n-i_1} \not\in L$. Similarly, it cannot delete $b^{j_1}$ or $b^{j_1}c$ from the configuration $q_1a^nb^nc$.
    
    \item[Sub-Case 2.2.] If it deletes $c$ from the configuration $q_1a^nb^nc$. Then, $q_0ca^nb^nc$ $\circlearrowright$ $q_1a^nb^nc$ $\circlearrowright$ $q_2a^nb^n$ $\circlearrowright^*$ $q_f$, where $q_2 \in Q$. In this case, $b^k \not\in \Sigma_{q_1}$ for $1 \leq k \leq m$. Now, consider a word $cb^nca^n$. The word $cb^nca^n \in L(\mathcal{A})$ because $q_0cb^nca^n$ $\circlearrowright$ $q_1b^nca^n$ $\circlearrowright$ $q_2a^nb^n$ $\circlearrowright^*$ $q_f$. But $cb^nca^n \not\in L$.
    
    \item[Sub-Case 2.3.] Hence, the automaton will have to delete $a^{i_1}$ from the configuration $q_1a^nb^nc$. The automaton cannot delete all $a$'s starting from the configuration $q_1a^nb^nc$ without deleting $b$ or $c$ because in this case it will use a loop, because of the choice of $n$, consisting of only $a$'s to delete all $a$'s and the same loop can be used multiple times to create a difference in the indices of $a$ and $b$ and, hence there will be a word $w \in L(\mathcal{A})$ such that $|w|_a \neq |w|_b$ but, such $w \not\in L$. Suppose it deletes $a^t$, $1 \leq t < n$,  without using loop, starting from the configuration $q_1a^nb^nc$ without deleting $b$ or $c$, then $q_0ca^nb^nc$ $\circlearrowright$ $q_1a^nb^nc$ $\circlearrowright^*$ $q_2a^{n-t}b^nc$ $\circlearrowright^*$ $q_f$. From the configuration $q_2a^{n-t}b^nc$, it can delete $a^{i'}b^{j'}$, where $0 \leq i' <(n-t), 1 \leq j' \leq m$, or $b^{j'}c$, where $0 \leq j' < m$. In this case, it will accept $ca^{t+i'}b^nca^{n-t-i'}$ or $ca^tb^nca^{n-t}$, depending on the deletion of $a^{i'}b^{j'}$ or $b^{j'}c$ respectively. But neither $ca^{t+i'}b^nca^{n-t-i'}$ nor $ca^tb^nca^{n-t}$ are elements of $L$.
\end{itemize}
Hence, from above discussed cases, we conclude $cL_{ab} \in$ $\bf{GRCOWJ}$ but $(cL_{ab})^* \not\in \bf{GRCOWJ}$.
\end{proof}

To prove other closure properties of the class $\bf{GRCOWJ}$, we give a Lemma which gives a language that is not in the class $\bf{GRCOWJ}$. The Lemma can be proved similar to Lemma \ref{GRCD}.
\begin{lemma}\label{a^nb^n}
There does not exist any $\it{GRCOWJFA}$ that accepts the language $\{a^nb^n~|~ n \geq 0 \}$.
\end{lemma}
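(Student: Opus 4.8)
The plan is to mirror the argument in Lemma~\ref{GRCD}, where we showed $D \notin \bf{GRCOWJ}$, since $\{a^nb^n \mid n \geq 0\}$ exhibits exactly the same structural obstruction: it is a ``single-matching-pair'' language with a linear order between the two blocks and no word ending in $a$. First I would assume, for contradiction, that some $\it{GRCOWJFA}$ $\mathcal{A}=(\Sigma,Q,q_0,F,R)$ accepts $L=\{a^nb^n \mid n \geq 0\}$, set $l=|Q|$ and $m=\max\{|w| : (p,w,q)\in R\}$, and fix $n > lm$. Since $a^nb^n \in L(\mathcal{A})$, there is an accepting computation $q_0a^nb^n \circlearrowright^* q_f$ with $q_f \in F$.

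The heart of the argument is the same case split as in Lemma~\ref{GRCD}. In Case~1, if any intermediate configuration has the form $q'w_1a$ (a configuration whose circular word ends in $a$), then by repeated application of Lemma~\ref{confi} there is a word $w \in \Sigma^*$ with $q_0 w w_1 a \circlearrowright^* q' w_1 a \circlearrowright^* q_f$, so $w w_1 a \in L(\mathcal{A})$; but no word of $L$ ends with $a$, a contradiction. Hence every intermediate configuration ends in $b$ (or is of the form $q_f$ with empty tape). In Case~2, since the automaton can never produce a configuration ending in $a$, starting from $q_0 a^n b^n$ it must consume all the $a$'s before touching any $b$ --- otherwise, deleting a subword straddling the $a$/$b$ boundary or a pure-$b$ subword first would, via Lemma~\ref{confi}, leave a configuration ending in $a$. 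Because $n > lm$, deleting the block $a^n$ without touching the $b$'s forces a repeated state, hence a loop consisting entirely of $a$-deletions; pumping that loop yields an accepting computation on $a^{n_1}b^{n}$ with $n_1 \neq n$, so $a^{n_1}b^{n} \in L(\mathcal{A}) \setminus L$, a contradiction.

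Combining the two cases gives $L \notin \bf{GRCOWJ}$, which is the assertion of Lemma~\ref{a^nb^n}. The main obstacle, as in Lemma~\ref{GRCD}, is making Case~2 fully rigorous: one must argue carefully that ``not producing a configuration ending in $a$'' genuinely forces the automaton to clear the entire $a$-block first (handling subwords of length up to $m$ that could span the boundary, and noting that a rule deleting something like $a^i b^j$ leaves a configuration whose circular word, $b^{n-j}a^{\dots}$ rotated, can be arranged by Lemma~\ref{confi} to end in $a$), and that the pumpable $a$-only loop indeed exists and can be iterated to break the index match. Everything else is a routine transcription of the proof of Lemma~\ref{GRCD}, which is why the statement can legitimately be asserted with ``can be proved similar to Lemma~\ref{GRCD}.''
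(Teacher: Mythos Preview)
Your proposal is correct and is exactly the approach the paper indicates: the paper does not spell out a separate proof of Lemma~\ref{a^nb^n} but simply states that it ``can be proved similar to Lemma~\ref{GRCD},'' and your write-up is a faithful transcription of that argument with $\{a^nb^n \mid n \geq 0\}$ in place of the Dyck language. The two-case analysis (no intermediate configuration ending in $a$ via Lemma~\ref{confi}, then a pumpable $a$-only loop forced by $n>lm$) is precisely what the paper has in mind.
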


Now, we discuss other closure properties of the language class $\bf{GRCOWJ}$.
\begin{proposition}
$\bf{GRCOWJ}$ is not closed under
\begin{itemize}
    \item[1.] intersection;
    \item[2.] concatenation;
    \item[3.] reversal;
    \item[4.] Kleene star.
\end{itemize}
\end{proposition}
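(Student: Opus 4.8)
The plan is to establish each of the four non-closure claims by exhibiting witness languages in $\bf{GRCOWJ}$ whose image under the relevant operation falls outside $\bf{GRCOWJ}$, using Lemma \ref{a^nb^n} (and Lemma \ref{GRCD}) as the source of a language that provably cannot be accepted by any $\it{GRCOWJFA}$. For the Kleene star part, the work is already done: the preceding proposition shows $cL_{ab}\in\bf{GRCOWJ}$ while $(cL_{ab})^*\notin\bf{GRCOWJ}$, so I would simply cite it. Thus only intersection, concatenation, and reversal remain, and each should follow a short ``two easy languages, one hard target'' pattern.

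For intersection, I would take the two $\it{GRCOWJFA}$-languages $L_1=\{a^nb^m \mid n,m\ge 0\}$ and $L_2=\{w\in\{a,b\}^*\mid |w|_a=|w|_b\}$; both are regular, hence in $\bf{GRCOWJ}$ by Proposition \ref{reggrc}, and $L_1\cap L_2=\{a^nb^n\mid n\ge 0\}$, which is not in $\bf{GRCOWJ}$ by Lemma \ref{a^nb^n}. For concatenation, the natural move is to write $\{a^nb^n\mid n\ge 0\}$ as a concatenation of two $\bf{GRCOWJ}$-languages; the language $\{a^nb^n\}$ itself is not context-free-trivial to factor, so instead I would use the language $L_{ab}$ of Example \ref{ab} (which already equals $\{a^nb^n\mid n\ge 0\}$ as a set... wait — actually $L(\mathcal{A})$ in Example \ref{ab} is exactly the set of accepted words, and $a^nb^n$ are accepted for all $n$; one must double-check whether words such as $a^2b^2a b$ are also accepted, but the alternating-state argument there shows acceptance forces the $a^nb^n$ shape). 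Given that subtlety, the cleaner route for concatenation is: take $L_{ab}\in\bf{GRCOWJ}$ and note $L_{ab}\cdot\{c\}$ and similar pad-concatenations, or more robustly, use that $\bf{GRCOWJ}$ contains $\{a\}^*$ and $\{b\}^*$ (regular) but $\{a\}^*\cdot\{b\}^*=a^*b^*$ is already in $\bf{GRCOWJ}$, so that does not work — instead concatenate $\{a^n c : n\ge 0\}$-type languages whose product re-creates a forbidden matching. The honest plan is to mirror the \cite{glowjfa} construction: exhibit $L',L''\in\bf{GRCOWJ}$ with $L'L''$ forcing a two-sided index match that Lemma \ref{a^nb^n}'s pumping-of-the-$a$-loop argument rules out. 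For reversal, I would invoke the $\bf{GRCOWJ}$/$\bf{GLCOWJ}$ reversal correspondence (the Proposition relating $L_1\in\bf{GRCOWJ}$ to $L_1^R\in\bf{GLCOWJ}$) together with the incomparability of $\bf{GRCOWJ}$ and $\bf{GLCOWJ}$: pick $L\in\bf{GRCOWJ}$ whose reversal lies in $\bf{GLCOWJ}\setminus\bf{GRCOWJ}$ — concretely, the reversal $L_c^R$ of the language $L_c$ from Example \ref{Lc}, since $L_c\in\bf{GLCOWJ}$ and by Lemma \ref{LcGRC} $L_c\notin\bf{GRCOWJ}$, so $L_c^R\in\bf{GRCOWJ}$ while $(L_c^R)^R=L_c\notin\bf{GRCOWJ}$.

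The main obstacle I anticipate is the concatenation case: unlike intersection (where regular witnesses suffice) and reversal (where the already-proven reversal bijection plus incomparability does all the work), concatenation needs a genuine argument that the factored language is in $\bf{GRCOWJ}$ while the product is not, and care is needed because $\bf{GRCOWJ}$ does contain $a^*b^*$ and many ``staircase'' languages, so the naive factorization $a^*\cdot b^*$ fails. I would resolve this by using the marked construction from \cite{glowjfa}: a separator symbol between the two factors so that the product looks like $w_1 \# w_2$ and forcing $w_1=a^n$, $w_2=b^n$ via states that, on the circular tape, can only delete $ab$-pairs across the $\#$ — then re-run the loop-pumping argument of Lemma \ref{a^nb^n}. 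The routine verifications (that each witness $\it{GRCOWJFA}$ accepts exactly the claimed language, and that the pumping argument transfers) I would state as ``shown analogously to Lemma \ref{GRCD}'' rather than grinding through the eight-group case analysis.
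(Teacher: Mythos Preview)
Your overall architecture is right, and for Kleene star you match the paper exactly. But two of the four cases have real issues, and a third differs from the paper in an interesting way.

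\textbf{Intersection.} Your witness pair works, but your justification does not: the language $L_2=\{w\in\{a,b\}^*\mid |w|_a=|w|_b\}$ is \emph{not} regular, so you cannot invoke Proposition~\ref{reggrc} for it. It is, however, a standard $\it{ROWJFA}$ language, so $L_2\in\bf{ROWJ}\subset\bf{GRCOWJ}$ by Note~\ref{owjgcowj}; with that fix your argument goes through. The paper instead intersects the regular language $a^*b^*$ with $L_{ab}$ from Example~\ref{ab} (which is in $\bf{GRCOWJ}$ by construction, not by regularity) to obtain $\{a^nb^n\mid n\ge 0\}$ and then applies Lemma~\ref{a^nb^n}. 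Either pair is fine once the membership justifications are correct.

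\textbf{Concatenation.} This is where you have a genuine gap. You briefly mention $L_{ab}\cdot\{c\}$ and then abandon it for a vague $\#$-marker scheme ``mirroring \cite{glowjfa}'', without specifying the two factors or why their product is outside $\bf{GRCOWJ}$. The paper's proof is exactly your discarded first idea: $L_{ab}\in\bf{GRCOWJ}$ by Example~\ref{ab}, $\{c\}\in\bf{GRCOWJ}$ since it is regular, and $L_{ab}\{c\}\notin\bf{GRCOWJ}$ by an argument parallel to Lemma~\ref{LcGRC} (any accepting run on $a^nb^na^nb^nc$---or here on words of $L_{ab}$ followed by $c$---forces either an intermediate configuration ending in a non-$c$ symbol, which via Lemma~\ref{confi} yields an accepted word not ending in $c$, or an $a$-only loop that can be pumped to break $|w|_a=|w|_b$). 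Your $\#$-marker plan, as stated, does not name concrete $L',L''$ and does not explain why the circular-tape pumping argument would apply to the product; as written it is not a proof.

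\textbf{Reversal.} Here your route is different from the paper's and is arguably cleaner. You take $L_c\in\bf{GLCOWJ}$ from Example~\ref{Lc}, use the reversal correspondence to get $L_c^R\in\bf{GRCOWJ}$, and observe $(L_c^R)^R=L_c\notin\bf{GRCOWJ}$ by Lemma~\ref{LcGRC}. This works with no new computation. The paper instead exhibits $c\{w\in\{a,b\}^*\mid |w|_a=|w|_b\}\in\bf{ROWJ}\subset\bf{GRCOWJ}$ and then argues directly, again in the style of Lemma~\ref{LcGRC}, that its reversal $\{w\in\{a,b\}^*\mid |w|_a=|w|_b\}c$ is not in $\bf{GRCOWJ}$. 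Your approach reuses existing lemmas wholesale; the paper's approach avoids the detour through $\bf{GLCOWJ}$ but requires one more Lemma~\ref{LcGRC}-type argument.
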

\begin{proof}
From Proposition \ref{reggrc} and Example \ref{ab}, we have $a^*b^*, L_{ab} \in \bf{GRCOWJ}$. But $a^*b^* \cap L_{ab}= \{a^nb^n~|~ n \geq 0\} \not\in \bf{GRCOWJ}$, by Lemma \ref{a^nb^n}. Hence, $\bf{GRCOWJ}$ is not closed under intersection.

From Example \ref{ab} and Proposition \ref{reggrc}, we have $ L_{ab},\{c\} \in \bf{GRCOWJ}$. Similar to Lemma \ref{LcGRC}, we can prove $L_{ab}\{c\} \not\in \bf{GRCOWJ}$. Hence, $\bf{GRCOWJ}$ is not closed under concatenation.

It was shown in \cite{owjfa} that the languages $c\{w \in \{a,b\}^*~|~|w|_a=|w|_b\} \in \bf{ROWJ}$. From Note \ref{owjgcowj}, we have $\bf{ROWJ \subset GRCOWJ}$ and hence $c\{w \in \{a,b\}^*~|~|w|_a=|w|_b\} \in \bf{GRCOWJ}$. Similar to the proof given in Lemma \ref{LcGRC}, one can show that $\{w \in \{a,b\}^*~|~|w|_a=|w|_b\}c \not\in \bf{GRCOWJ}$,  Hence, $\bf{GRCOWJ}$ is not closed under reversal.
\end{proof}

\section{Conclusion}\label{con}
In this paper, we have introduced a new extended version of the model one-way jumping finite automata ($\it{OWJFA}$) called generalized circular one-way jumping finite automata ($\it{GCOWJFA}$). By extended version, we mean that $\it{OWJFA}$ change its states by deleting a letter of an input word whereas $\it{GCOWJFA}$ change its states by deleting a subword of an input word. The newly introduced version $\it{GCOWJFA}$ is different from one of the extended versions generalized linear one-way jumping finite automata ($\it{GLOWJFA}$) of $\it{OWJFA}$. Like $\it{OWJFA}$ and $\it{GLOWJFA}$, $\it{GCOWJFA}$ also has two variants called generalized right circular one-way jumping finite automata ($\it{GRCOJFA}$) and generalized left circular one-way jumping finite automata ($\it{GLCOWJFA}$). We have shown that the newly defined version $\it{GCOWJFA}$ is more powerful than $\it{OWJFA}$. We also show that its computational power is distinct from that of $\it{GLOWJFA}$. We have compared the computational power of the variants $\it{GRCOWJFA}$ and $\it{GLCOWJFA}$. The language classes of $\it{GRCOWJFA}$ and $\it{GLCOWJFA}$ are compared with the language classes of Chomsky hierarchy. Finally, closure properties of the language classes of $\it{GRCOWJFA}$ and $\it{GLCOWJFA}$ have been investigated.

\end{document}